\newtheorem{theorem}{Theorem}
\newtheorem{lemma}{Lemma}
\newtheorem{definition}{Definition}
\begin{document}
%
\title{ Sparse Fast Fourier Transform for Exactly and Generally $K$-Sparse Signals by Downsampling and Sparse Recovery}
%
%
%
\author{Sung-Hsien~Hsieh,
        Chun-Shien~Lu,~\IEEEmembership{Member,~IEEE}
        and~Soo-Chang~Pei,~\IEEEmembership{Fellow,~IEEE} 
\thanks{S.-H. Hsieh is with the Institute of Information Science, Academia Sinica,
Taipei 115, Taiwan, and also with the Graduate Institute of Communication Engineering, National Taiwan University, Taipei 106, Taiwan.}
\thanks{C.-S. Lu is with Institute of Information Science, Academia Sinica, Taipei,
Taiwan (e-mail: lcs@iis.sinica.edu.tw).}
\thanks{S.-C. Pei is with the Graduate Institute of Communication Engineering,
National Taiwan University, Taipei 106, Taiwan.}
}

%
%

\markboth{}%
{}
%



\maketitle

\begin{abstract}
Fast Fourier Transform (FFT) is one of the most important tools in digital signal processing.
FFT costs $O(N \log N)$ for transforming a signal of length $N$.
Recently, Sparse Fourier Transform (SFT) has emerged as a critical issue addressing how to compute a compressed Fourier transform of a signal with complexity being related to the sparsity of its spectrum.

In this paper, a new SFT algorithm is proposed for both exactly $K$-sparse signals (with $K$ non-zero frequencies) and generally $K$-sparse signals (with $K$ significant frequencies), with the assumption that the distribution of the non-zero frequencies is uniform.
The nuclear idea is to downsample the input signal at the beginning; then, subsequent processing operates under downsampled signals, where signal lengths are proportional to $O(K)$.
Downsampling, however, possibly leads to ``aliasing''.
By the shift property of DFT, we recast the aliasing problem as complex Bose-Chaudhuri-Hocquenghem (BCH) codes solved by syndrome decoding.
The proposed SFT algorithm for exactly $K$-sparse signals recovers $1-\tau$ frequencies with computational complexity $O(K \log K)$ and probability at least $1-O(\frac{c}{\tau})^{\tau K}$ under $K=O(N)$, where $c$ is a user-controlled parameter.

For generally $K$-sparse signals, due to the fact that BCH codes are sensitive to noise, we combine a part of syndrome decoding with a compressive sensing-based solver for obtaining $K$ significant frequencies.
The computational complexity of our algorithm is $\max \left( O(K \log K), O(N) \right)$, where the Big-O constant of $O(N)$ is very small and only a simple operation involves $O(N)$.
Our simulations reveal that $O(N)$ does not dominate the computational cost of sFFT-DT.

In this paper, we provide mathematical analyses for recovery performance and computational complexity, and conduct comparisons with known SFT algorithms in both aspects of theoretical derivations and simulation results.
In particular, our algorithms for both exactly and generally $K$-sparse signals are easy to implement.
\end{abstract}

\begin{IEEEkeywords}
Compressed Sensing, Downsampling, FFT, Sparse FFT, Sparsity.
\end{IEEEkeywords}

%
\IEEEpeerreviewmaketitle

\section{Introduction}\label{sec:intro}
\subsection{Background and Related Work}
\IEEEPARstart{F}{ast} Fourier transform (FFT) is one of the most important approaches for fast computing discrete Fourier transform (DFT) of a signal with time complexity $O(N \log N)$, where $N$ is the signal length.
FFT has been used widely in the communities of signal processing and communications.
How to outperform FFT, however, remains a challenge and persistently receives attention.

Sparsity is inherent in signals and has been exploited to speed up FFT in the literature.
A signal of length $N$ is called exactly $K$-sparse if there are $K$ non-zero frequencies with $K < N$.
On the other hand, a signal is called generally $K$-sparse if all frequencies are non-zero but we are only interested in keeping the first $K$-largest (significant) frequencies in terms of magnitudes and ignore the remainder.
Instead of computing all frequencies, Sparse Fourier Transform (SFT) has emerged as a critical topic and aim to compute a compressed DFT, where the time complexity is proportional to $K$.

A. C. Gilbert \cite{Gilbert2014} {\em et al.} propose an overview of SFT and summarize a common three-stage approach:
1) identify locations of non-zero or significant frequencies;
2) estimate the values of the identified frequencies; and
3) subtract the contribution of the partial Fourier representation computed from the first two stages from the signal and go back to stage 1.
Some prior works are briefly described as follows.

M. A. Iwen \cite{Iwen2010} proposes a sublinear-time SFT algorithm based on Chinese Remainder Theorem (CRT) with computational complexities
(a) $O(K \log^{5} N)$ with a non-uniform failure probability per signal and
(b) $O(K^2 \log^4 N)$ with a deterministic recovery guarantee.
Iwen's algorithm can work for general $N$ with the help of interpolation.
Although the algorithm offers strong theoretical analysis, the empirical experiments show that it suffers Big-O constants. For example, in Fig. 5 of \cite{Iwen2010}, it shows to outperform FFTW under $K=8$ and $N=2^{18}$.
The approximation error bounds in \cite{Iwen2010} are further improved in \cite{Iwen2013}.

H. Hassanieh {\em et al.} propose so-called Sparse Fast Fourier Transform (sFFT) \cite{Haitham2012}\cite{Haitham2012_1}. 
The idea behind sFFT is to subsample fewer frequencies (proportional to $K$) since most of frequencies are zero or insignificant.
Nevertheless, the difficulty is which frequencies should be subsampled as the locations and values of the $K$ non-zero frequencies are unknown. To cope with this difficulty, sFFT utilizes the strategies of filtering and permutation introduced in \cite{Gilbert2005}, which can increase the probability of capturing useful information from subsampled frequencies.
For exactly $K$-sparse and general $K$-sparse signals, sFFT costs $O (K \log N)$ and $O (K \log N \log \frac{N}{K})$, respectively.
In their simulations, sFFT is faster than FFTW \cite{Frigo2005} (a very fast C subroutine library for computing FFT) for exactly $K$-sparse signals with $K\leq\frac{N}{2^{6}}$.


Even though sFFT \cite{Haitham2012}\cite{Haitham2012_1} is outstanding, there are some limitations, summarized as follows:
1) Filtering and permutation are operated on the input signal. These operations are related to $N$.
Thus, the complexity of sFFT still involves $N$ and cannot achieve the theoretical ideal complexity $O(K \log K)$.
2) sFFT only guarantees that it succeeds with a constant probability ({\em e.g.}, $2/3$).
3) The implementation of sFFT for generally $K$-sparse signals is very complicated as it involves too many parameters that are difficult to set.\footnote{In fact, according to our private communication with the authors of \cite{Haitham2012}\cite{Haitham2012_1}, they would not recommend implementing this code since it is not trivial. The authors also suggest that it is not easy to clearly illustrate which setting will work best because of the constants in the Big-O functions and because of the dependency on the implementation. The authors themselves did not implement it since they believed that the constants would be large and that it would not realize much improvement over FFTW.}

Ghazi {\em et al.} \cite{Ghazi2013} propose another algorithm based on Prony's method for exactly $K$-sparse signals.
The basic idea is similar to our previous work \cite{Hsieh2013}.
The key difference is that Ghazi {\em et al.}'s method recovers all $K$ non-zero frequencies once, while we propose a top-down strategy to solve $K$ non-zero frequencies iteratively. Furthermore, due to different parameter settings and root finding algorithms, Ghazi's SFT costs $O(K \log K+K(\log\log N)^{O(1)})$ along with different big-O constants. The comparison between these two methods in terms of computational complexity and recovery performance will be discussed later in Sec. \ref{ssec:howtochoose_a_and_d}.

S. Heider {\em et al.}'s method \cite{Heider2013} combines Prony-like methods with quasi random sampling and band pass filtering.
Compared with our method, they estimate the positions and values of non-zero frequencies in each band based on the ESPRIT method instead of syndrome decoding.
ESPRIT requires more computational cost resulting in the total complexity being $O(K^{\frac{5}{3}} \log^{2} N)$.
Their proof also shows $K=O(N^{0.5})$ that is more strict than $K=O(N)$ in our case for exactly-$K$ sparse signal.

Pawar and Ramchandran \cite{Pawar2013} propose an algorithm, called FFAST (Fast Fourier Aliasing-based Sparse Transform), which focuses on exactly $K$-sparse signals.
Their approach is based on filterless downsampling of the input signal using a constant number of co-prime downsampling factors guided by CRT.
These aliasing patterns of different downsampled signals are formulated as parity-check constraints of good erasure-correcting sparse-graph codes.
FFAST costs $O(K \log K)$ but relies on the constraint that co-prime downsampling factors must divide $N$.
Moreover, the smallest downsampling factor bounds FFAST's computational cost.
For example, if $N=2^{20}3^{2}$ and $K=2^{16}$, the smallest downsampling factor is $3^{2}$.
In this case, the computational cost of calculating FFT of a downsampled signal with length $\frac{N}{3}^{2}$ is higher than $O(K \log K)$.
Actually, these limitations are possibly harsh.

We summarize and compare the SFT algorithms reviewed above in Table \ref{Table: SFT comparision} in terms of the number of samples, computational complexity, and assumption regarding sparsity.
More specifically, the number of samples decides how much information SFT algorithms require in order to reconstruct $K$-sparse signals.
It is especially important for some applications, including Analog-to-Digital converter, which are benefited by low sampling rates.
Moreover, the assumption of a certain range of sparsity guarantees that SFT algorithms can have high quality of reconstruction.
We can find from Table \ref{Table: SFT comparision} that our algorithms have the lowest computational complexity, the lowest number of samples, and the best range of sparsity for exactly $K$-sparse signals.
Although the sparsity constraint $K=\Theta(N)$ seems to be more tough for generally $K$-sparse signals in our method,
for a (very) sparse signal we still can solve it by assuming that its sparsity is higher than the true one with more computational cost.
In the simulations, we show that the Big-O constants for both exactly $K$-sparse and generally $K$-sparse signals are actually small, implying the practicability of our proposed approaches for real implementation.

\begin{table}[t]
\fontsize{7.5pt}{1em}\selectfont
\centering
\setlength{\abovecaptionskip}{0pt}
\setlength{\belowcaptionskip}{4pt}
\caption{Comparison between SFT algorithms in terms of computational complexity, required samples and assumptions.}
\label{Table: SFT comparision}
\doublerulesep=0pt
\begin{tabular}[tc]{|c||c|c|c|c|c|c|}
\hline
 \multirow{2}{*}{}&    \multicolumn{3}{c|}{Exactly $K$-sparse signal}&  \multicolumn{3}{c|}{Generally $K$-sparse signal}  \\
\cline{2-7} & Samples  & Complexity & Assumption & Samples & Complexity & Assumption   \\ \hline\hline
\cite{Iwen2010}& $O(K \log^{4} N)$  &$O(K \log^{5} N)$ & $K=O(N)$ & $O(K \log^{4} N)$& $O(K \log^{5} N)$&  $K=O(N)$  \\ \hline
\cite{Haitham2012_1}& $O(K )$ & $O(K \log N)$ & $K=O(N)$ & $O(K \log(\frac{N}{K})/\log\log N )$ & $O (K \log N \log \frac{N}{K})$ &  $K=O(N)$  \\ \hline
\cite{Ghazi2013}& $O(K)$ & $O(K \log K+K(\log\log N)^{O(1)})$&  $K=O(N)$ &$O(K \log N)$ & $O(K \log^{2} N)$ & $K=\Theta(\sqrt{N})$    \\ \hline
\cite{Heider2013}& $O(K)$ & $O(K^{\frac{5}{3}} \log^{2} N)$ & $K=O(\sqrt{N})$ & void & void & void    \\ \hline
\cite{Pawar2013}& $O(K)$ & $O(K \log K)$& $K=O(N^{\alpha})$, $\alpha<1$ & void & void & void   \\ \hline
This paper& $O(K)$ & $O(K \log K)$& $K=O(N)$ & $O(K)$  & $O(K \log K)$ & $K=\Theta(N)$   \\ \hline
\end{tabular}
\end{table}

\subsection{Our Contributions}
In our previous work \cite{Hsieh2013}, we propose a SFT algorithm, called sFFT-DT, based on filterless downsampling with time complexity of $O(K \log K)$ only for exactly $K$-sparse signals.
The idea behind sFFT-DT is to downsample the input signal in the time domain before directly conducting all subsequent operations on the downsampled signals.
By choosing an appropriate downsampling factor to make the length of a downsampled signal be $O(K)$, no operations related to $N$ are required in sFFT-DT.
Downsampling, however, possibly leads to ``aliasing,'' where different frequencies become indistinguishable in terms of their locations and values.
To overcome this problem, the locations and values of these $K$ non-zero entries are considered as unknown variables and the ``aliasing problem'' is reformulated as ``Moment Preserving Problem (MPP)''.
Furthermore, sFFT-DT is conducted in a manner of a top-down iterative strategy under different downsampling factors, which can efficiently reduce the computational cost.
In comparison with other CRT-based approaches \cite{Heider2013}\cite{Pawar2013} that require multiple co-prime integers dividing $N$, our method only needs the downsampling factor to divide $N$ but does not suffer the co-prime constraint, implying that sFFT-DT has more freedom for $N$.

In this paper, we further examine the accurate computational cost and theoretical performance of sFFT-DT for exactly $K$-sparse signals.
We derive the Big-O constants of computational complexity of sFFT-DT and show that they are smaller than those of Ghazi {\em et al.}'s sFFT \cite{Ghazi2013}.
In addition, sFFT-DT is efficient due to $K=O(N)$, which makes it useful whatever the sparsity $K$ is.
Finally, all operations of sFFT-DT are solved via analytical solutions but those of Ghazi {\em et al.}'s sFFT involve a numerical root finding algorithm, which is more complicated in terms of hardware implementation.

In the context of SFT, sparsity $K$ plays an important role.
The performance and computational complexity of previous SFT algorithms \cite{Haitham2012}\cite{Haitham2012_1}\cite{Ghazi2013}\cite{Pawar2013} have been analyzed based on the assumption that sparsity $K$ is known in advance.
In practice, however, $K$ is unknown and is an input parameter decided by the user.
If $K$ is not guessed correctly, the performance is degraded and/or the computational overhead is higher than expected because the choice of some parameters depends on $K$.
In this paper, we propose a simple solution to address this problem and relax this impractical assumption.
We show that the cost for deciding $K$ is the same as that required for sFFT-DT with known $K$.

In addition to conducting more advanced theoretical analyses, we also study sFFT-DT for generally $K$-sparse signals in this paper.
For generally $K$-sparse signals, since all frequencies are non-zero, each frequency of a downsampled signal is composed of significant and insignificant frequencies due to aliasing.
To extract significant components from each frequency, the concept of sparse signal recovered from fewer samples, originating from compressive sensing (CS) \cite{Donoho2006}, is employed since significant entries are ``sparse''. A pruning strategy is further used to exclude locations of insignificant terms.
We prove the sufficient conditions of robust recovery, which means reconstruction error is bounded, with time complexity $\max(O(K\log K), O(N))$ under $K=\Theta(N)$.
The empirical experiments show that the Big-O constant of sFFT-DT is small and outperforms FFT when $N=2^{24}$ and $K\leq 2^{16}$.

Finally, we conclude that our methods are easy to implement and are demonstrated to outperform the state-of-the-art in terms of theoretical analyses and simulation results.

\subsection{Organization of This Paper}
The remainder of this paper is organized as follows.
In Sec. \ref{Sec: sFFT-DT: Exact K-Sparse}, we describe the proposed method for exactly $K$-sparse signals.
Our method for generally $K$-sparse signals will be expounded in Sec. \ref{Sec: sFFT-DT: General K-Sparse}.
Conclusions are provided in Sec. \ref{Sec: Conclusions}.

%

\section{sFFT-DT for Exactly $K$-Sparse Signals}\label{Sec: sFFT-DT: Exact K-Sparse}
We describe the proposed method for exactly $K$-sparse signals and provide analyses for parameter setting, computational complexity, and recovery performance.
The proposed method contains three steps.
\begin{itemize}
  \item[1.] Downsample the original signal in the time domain.
  \item[2.] Calculate Discrete Fourier Transform (DFT) of the downsampled signal by FFT.
  \item[3.] Use the DFT of the downsampled signal to locate and estimate $K$ non-zero frequencies..
\end{itemize}
Steps 1 and 2 are simple and straightforward.
Thus, we focus on Step 3 here.

Throughout the paper, common notations are defined as follows.
Let $\bm{x}\in \mathbb{C}^{N}$ be the input signal in the time domain, and let $\hat{\bm{x}} \in \mathbb{C}^{N} $ be DFT of $\bm{x}$.
$F\in \mathbb{C}^{N\times N}$ is the DFT matrix such that $\hat{\bm{x}}=F\bm{x}$ with $F_{k,l}=e^{\frac{-i2\pi kl}{N}}/N$ and $F_{k,l}^{-1}=e^{\frac{i2\pi kl}{N}}$.

\subsection{Problem Formulation}\label{ssec:DFT Property}
Let $\bm{x_{d}}$ be the signal downsampled from an original signal $\bm{x}$, where $x_{d}[k]=x[dk]$, $k\in [0,\frac{N}{d}-1]$, and integer $d\geq 1$  is a downsampling factor.
The length of the downsampled signal $\bm{x_{d}}$ is $\frac{N}{d}$
Let $\bm{\hat{x}_{d}}$ be DFT of $\bm{x_{d}}$, where
\small
\begin{equation}
\begin{aligned}
	  \hat{x}_{d}[k]=&(\hat{x}[k]+\hat{x}[k+\frac{N}{d}]+\hat{x}[k+2\frac{N}{d}]+...+\hat{x}[k+(d-1)\frac{N}{d}])/d.
\end{aligned}
\label{eq:xd}
\end{equation}
\normalsize
The objective here is to locate and estimate $K$ non-zero frequencies of $\bm{\hat{x}}$ from $\bm{\hat{x}_{d}}$.

Note that each frequency of $\bm{\hat{x}_{d}}$ is a sum of $d$ terms of $\bm{\hat{x}}$.
When more than two terms of $\bm{\hat{x}}$ are non-zero, ``aliasing'' occurs, as illustrated in Fig. \ref{fig:Iterative Pyramid}.
Fig. \ref{fig:Iterative Pyramid}(a) shows an original signal in the frequency domain, where only three frequencies are non-zero (appearing at normalized frequencies = $0\pi$, $0.5\pi$, and $\pi$).
Fig. \ref{fig:Iterative Pyramid}(b) shows the downsampled signal in the frequency domain when $d=2$, where the downsampled frequency at $0\pi$ incurs aliasing; {\em i.e.}, the frequency of $\bm{\hat{x}}$ at $0\pi$ collides with the one at $\pi$.
In Fig. \ref{fig:Iterative Pyramid}(b), we solve all non-zero downsampled frequencies once, no matter whether aliasing occurs or not.
This procedure is called non-iterative sFFT-DT and will be discussed in detail later.
Instead of solving all of the downsampled frequencies once, Fig. \ref{fig:Iterative Pyramid}(c) illustrates an example of iteratively solving frequencies.
At the first iteration, the downsampled frequency without aliasing at $1\pi$ is solved.
This makes the remaining downsampled frequencies more sparse.
Then, the signal is downsampled again with $d=4$.
At the second iteration, we solve the downsampled frequency with aliasing at $0\pi$.
This procedure, called iterative sFFT-DT, will be discussed further in Sec. \ref{ssec:ItMethod}.


\begin{figure}[h]
  \centering{\epsfig{figure=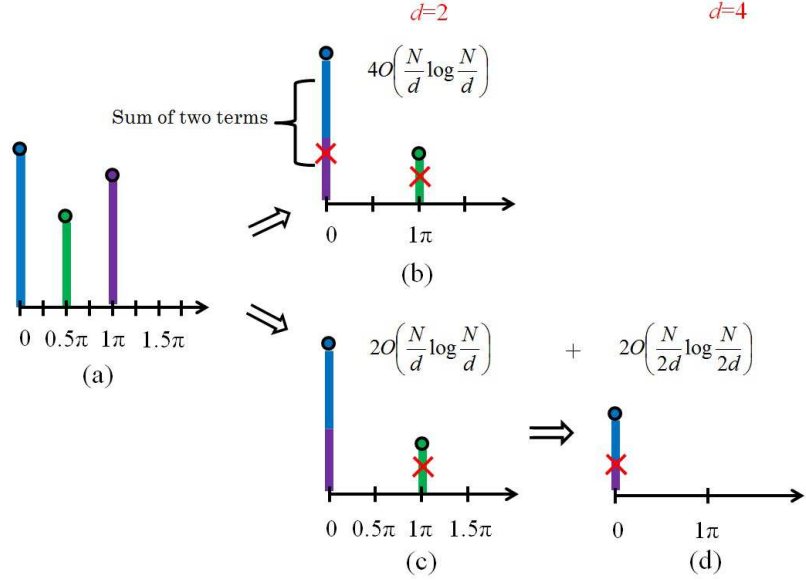,width=4.25 in}}
\hfill
\caption{Aliasing and its iterative solver. (a) Original signal in frequency domain. (b) Downsampled signal in frequency domain with $d=2$. If we want to solve all frequencies once, it requires $4$ FFTs. (c) Similar to (b), however, the frequency (at normalized frequency $1\pi$) at $d=2$ is solved first and requires $2$ FFTs. (d) Remaining frequency (at $0\pi$) requires $2$ extra FFTs at $d=4$.}
\label{fig:Iterative Pyramid}
\end{figure}

In the following, we describe how to solve the aliasing problem by introducing the shift property of DFT.
Let $x_{d,l}[k]=x[dk+l]$, where $l$ denotes the shift factor.
Each frequency of $\bm{\hat{x}_{d,l}}$ is denoted as:
\small
\begin{equation}
\begin{aligned}
     \hat{x}_{d,l}[k]&=(\hat{x}[k]F_{k,l}^{-1}+\hat{x}[k+\frac{N}{d}]F_{k+\frac{N}{d},l}^{-1}+...+\hat{x}[k+(d-1)\frac{N}{d}]F_{k+(d-1)\frac{N}{d},l}^{-1})/d.
\end{aligned}
\label{eq:xd with shift}
\end{equation}
\normalsize
Thus, Eq. (\ref{eq:xd with shift}) degenerates to Eq. (\ref{eq:xd}) when $l=0$. In practice, all we can obtain are $ \hat{x}_{d,l}[k]$'s for different $l$'s.

For each downsampling factor $d$, there will be no more than $d$ terms on the right side of Eq. (\ref{eq:xd with shift}), where
each term contains two unknown variables, $\hat{x}[k]$ and $F_{k,l}^{-1}$.
Let $a$, $1\leq a\leq d$, denote the number of terms on the right side of Eq. (\ref{eq:xd with shift}).
Therefore, we need $2a$ equations to solve these $2a$ variables, and $l$ is within the range of $[0,2a-1]$.
By taking the above into consideration, the problem of solving the $2a$ unknown variables on the right side of Eq. (\ref{eq:xd with shift}) can be formulated\footnote{In the previous version \cite{Hsieh2013}, it is interpreted as a moment preserving problem (MPP). Specifically, solving MPP is equivalent to solving complex BCH codes, where the syndromes produced by partial Fourier transform are consistent with moments.} via BCH codes as:
\small
\begin{equation}
\begin{aligned}
    m_{0}&=p_{0}z_{0}^{0}+p_{1}z_{1}^{0}+...+p_{a-1}z_{a-1}^{0}, \\
    m_{1}&=p_{0}z_{0}^{1}+p_{1}z_{1}^{1}+...+p_{a-1}z_{a-1}^{1}, \\
                                                   &\vdots   \\
    m_{2a-1}&=p_{0}z_{0}^{2a-1}+p_{1}z_{1}^{2a-1}+...+p_{a-1}z_{a-1}^{2a-1},
\end{aligned}
\label{eq:MPT obejective fun}
\end{equation}
\normalsize
where $\hat{x}_{d,l}[k]$ is known and denoted as $m_{l}$  while $p_{j}$ and $z_{j}^{l}$ represent unknown $\hat{x}[s_{j}]$ and $F_{s_{j},l}^{-1}$, respectively, for $s_{j} \in \{k, \  k+\frac{N}{d} \  , ...\  ,k+(d-1)\frac{N}{d}\}$ and $j\in [0,a-1]$. To simplify the notation, we let $S_{k}= \{k, \  k+\frac{N}{d} \  , ...\  ,k+(d-1)\frac{N}{d}\}$ and $U_{k}=\{ F_{k,l}^{-1} , \  F_{k+\frac{N}{d},l}^{-1} \  , ...\  ,  F_{k+(d-1)\frac{N}{d},l}^{-1} \}$.

It is trivial that no aliasing occurs if $a=1$, irrespective of the downsampling factor.
Under this circumstance, we have $m_{0}=\hat{x}_{d,0}[k]$, $m_{1}=\hat{x}_{d,1}[k]$, $m_{0}=p_{0}z_{0}^{0}=\hat{x}[s_{0}]/d$, and $m_{1}=p_{0}z_{0}^{1}=\hat{x}[s_{0}]e^{i2\pi s_{0}/N}/d$, according to Eq. (\ref{eq:MPT obejective fun}).
We obtain that $|m_{0}|=|\hat{x}[s_{0}]|/d=|m_{1}|$ and $m_{1}/m_{0}=e^{i2\pi s_{0}/N}$.
After some derivations, we can solve $s_{0}$ and assign $\hat{x}[s_{0}]=d\hat{x}_{d,0}[k]$ at the position $s_0$.
The above solver only works under a non-aliasing environment with $a=1$.
Nevertheless, when aliasing appears ({\em i.e.}, $a>1$), it fails.

To solve the aliasing problem, it is observed from Eq. (\ref{eq:MPT obejective fun}) that all we know are $ m_{i}$'s for $0\leq i\leq 2a-1$, called syndromes in BCH codes.
Thus, we utilize syndrome decoding \cite{MacWilliams11977}, which is also equivalent to the solver presented in Ghazi {\em et al.}'s sFFT.
Syndrome decoding is discussed in the next subsection.

\subsection{Syndrome Decoding}\label{ssec:MPT}
Note that Eq. (\ref{eq:MPT obejective fun}) is nonlinear and cannot be solved by simple linear matrix operations.
On the contrary, we have to solve $z_j$'s first, such that Eq. (\ref{eq:MPT obejective fun}) becomes linear.
Then, $p_{i}$'s can be solved by matrix inversion.
Thus, the main difficulty is how to solve $z_j$'s given known syndromes.
According to \cite{Szego1975}, given the unique syndromes with $m_{0}$, $m_{1}$, ..., $m_{2a-1}$, there must exist the corresponding orthogonal polynomial equation, $P(z)$, with roots $z_{j}$'s for $0\leq j\leq a-1$.
That is, $z_{j}$'s can be obtained as the roots of $P(z)$.
The steps for syndrome decoding are as follows. \\
Step (i): Let the orthogonal polynomial equation $P(z)$ be:
\small
\begin{equation}
P(z)=z^{a}+c_{a-1}z^{a-1}+...+c_{1}z+c_{0}.
\label{eq:poly fun of MPT}
\end{equation}
\normalsize
The relationship between $P(z)$ and the syndromes is as follows:
\small
\begin{equation}
\begin{aligned}
    -m_{a}&=c_{0}m_{0}+c_{1}m_{1}+...+c_{a-1}m_{a-1}, \\
    -m_{a+1}&=c_{0}m_{1}+c_{1}m_{2}+...+c_{a-1}m_{a}, \\
                                                   &\vdots   \\
    -m_{2a-1}&=c_{0}m_{a-1}+c_{1}m_{a}+...+c_{a-1}m_{2a-2}.
\end{aligned}
\label{eq:auxiliary fun of MPT}
\end{equation}
\normalsize
Eq. (\ref{eq:auxiliary fun of MPT}) can be formulated as $\bm{m}=\bm{M}\bm{c}$, where $\bm{M}_{i,j}= m_{i+j}$, $\bm{c}=[c_{0} \ c_{1} \ ... \ c_{a-1} ]^T$, and $\bm{m}=[-m_{a} \ -m_{a+1} \ ... \ -m_{2a-1} ]^T$.
Thus, Eq. (\ref{eq:auxiliary fun of MPT}) can be solved by matrix inversion $\bm{M}^{-1}$ to obtain $c_{j}$'s.\\
Step (ii): Find the roots of $P(z)$ in Eq. (\ref{eq:poly fun of MPT}).
These roots are the solutions of $z_{0}$, $z_{1}$,...$z_{a-1}$, respectively.\\
Step (iii): Substitute all $z_{j}$'s into Eq. (\ref{eq:MPT obejective fun}), and solve the resulting equations to obtain $p_{j}$'s.

Tsai \cite{Tsai1985} showed a complete analytical solution composed of the aforementioned three steps for $a\leq 4$, based on the constraint that $p_{0}+p_{1}+...+p_{a-1}=1$.
Nevertheless, for the aliasing problem considered here, the constraint is $p_{0}+p_{1}+...+p_{a-1}=\hat{x}_{d,0}[k]$, as indicated in Eq. (\ref{eq:xd with shift}).
We have also derived the complete analytical solution accordingly for $2\leq a\leq 4$.
Please see Appendix in Sec. \ref{Sec: Appendix}.
The analytical solutions for a univariate polynomial with $a\leq 4$ cost $O(a^2)$ operations.
Since there are $\frac{N}{d}$ frequencies, the computational cost of syndrome decoding is $O(\frac{N}{d}a^2)$.
For $a > 4$, Step (i) still costs $O(a^2)$, according to the Berlekamp-Massey algorithm \cite{Massey1963}, which is well-known in Reed-Solomon decoding \cite{MacWilliams11977}.
In addition, Step (iii) is designed to calculate the inverse matrix of a Vandermonde matrix and costs $O (a^2)$ \cite{NChen2008}.
There is, however, no analytical solution of Step (ii) for $a > 4$.
Thus, numerical methods of root finding algorithms with finite precision are required.
A fast algorithm proposed by Pan \cite{Pan2002} can approximate all of the roots with $O(a(\log\log N)^{O(1)})$, where the detailed proof was shown in \cite{Ghazi2013}.
If $(\log\log N)^{O(1)}>a$, Step (ii) will dominate the cost of syndrome decoding.

It is noted that the actual number of collisions for each frequency, $a$ ($1\leq a\leq d$), is unknown in advance.
In practice, we choose a maximum number of collision $a_{m}$ and expect $a\leq a_{m}$ for all downsampled frequencies. Under the circumstance, $2a_{m}$ syndromes are required for syndrome decoding.
If $a$'s of all downsampled frequencies are smaller than or equal to $a_{m}$, the syndrome decoding perfectly recovers all of the frequencies; {\em i.e.,} it resolves all non-zero values and locations of $\bm{\hat{x}}$.
Otherwise, the non-zero entries of $\bm{\hat{x}}$ cannot be recovered due to insufficient information.
Although a larger $a_{m}$ guarantees better recovery performance, it also means that more syndromes and higher computational cost are required.

In sum, the cost of syndrome decoding consists of two parts. Since the size of a downsampled signal is $\frac{N}{d}$, the cost of generating the required syndromes via FFT is $O(2a_{m}\frac{N}{d}\log \frac{N}{d})$, which is called the ``P1 cost of syndrome decoding'' hereafter.
Second, as previously mentioned, solving the aforementioned Steps (i), (ii), and (iii) will cost $ O(\frac{N}{d}a_{m}^2)$ for $a_{m} \leq 4$ and cost $O(\frac{N}{d}a_{m}(\log\log N)^{O(1)})$ for $a_{m} > 4$, where either of which is defined as the ``P2 cost of syndrome decoding''.
Lemma \ref{lemma:cost of non-iterative sFFT-DT} summarizes the computational cost of syndrome decoding.
\begin{lemma}
\label{lemma:cost of non-iterative sFFT-DT}
Give $a_{m}$ and $d=O(\frac{N}{K})$, sFFT-DT, including generating syndromes by FFTs and syndrome decoding, totally costs $O(a_{m}\frac{N}{d}\log \frac{N}{d})$ for $a_{m}\leq 4$ and $O(a_{m}\frac{N}{d}\log \frac{N}{d}+a_{m}\frac{N}{d}(\log \log N)^{O(1)})$ for $a_{m} > 4$.
\end{lemma}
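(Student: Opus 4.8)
The plan is to decompose the total cost into the two components already identified above---the ``P1 cost'' of producing the syndromes by FFT and the ``P2 cost'' of running Steps (i)--(iii) of syndrome decoding at every downsampled frequency---then add them and simplify the resulting asymptotics in the two regimes $a_m\leq 4$ and $a_m>4$.

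First I would bound the P1 cost. Since syndrome decoding needs the shifted downsampled signals $\bm{x}_{d,l}$ for $l\in[0,2a_m-1]$, we must compute $2a_m$ DFTs, each of a signal of length $\frac{N}{d}$. One FFT of length $\frac{N}{d}$ costs $O(\frac{N}{d}\log\frac{N}{d})$, so the $2a_m$ transforms together cost $O(a_m\frac{N}{d}\log\frac{N}{d})$, which is the P1 cost.

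Next I would bound the P2 cost per frequency and multiply by the $\frac{N}{d}$ frequencies. Step (i), solving $\bm{m}=\bm{M}\bm{c}$ for the $a_m$ coefficients, costs $O(a_m^2)$ (matrix inversion for small $a_m$, the Berlekamp--Massey algorithm for $a_m>4$); Step (iii), inverting the Vandermonde system in the roots $z_j$, costs $O(a_m^2)$. Only Step (ii) differs between the regimes: the Appendix's analytical formulas cost $O(a_m^2)$ when $a_m\leq 4$, while Pan's approximate root finder costs $O(a_m(\log\log N)^{O(1)})$ when $a_m>4$. Thus the per-frequency cost is $O(a_m^2)$ for $a_m\leq 4$ and $O(a_m^2+a_m(\log\log N)^{O(1)})$ for $a_m>4$; summing over the $\frac{N}{d}$ frequencies gives the two P2 bounds.

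Finally I would add P1 and P2 and simplify. The key observation---and the only place needing care---is that the $O(\frac{N}{d}a_m^2)$ contribution (from Steps (i) and (iii), and from Step (ii) in the small regime) is absorbed into the P1 term $O(a_m\frac{N}{d}\log\frac{N}{d})$ whenever $a_m=O(\log\frac{N}{d})$. This holds in the setting of the lemma because $d=O(\frac{N}{K})$ forces $\frac{N}{d}=\Omega(K)$, so $\log\frac{N}{d}=\Omega(\log K)$ dominates the bounded count $a_m$ for the signal lengths of interest. Consequently, for $a_m\leq 4$ the whole cost collapses to $O(a_m\frac{N}{d}\log\frac{N}{d})$, and for $a_m>4$ only the P1 term and the surviving Step (ii) term remain, giving $O(a_m\frac{N}{d}\log\frac{N}{d}+a_m\frac{N}{d}(\log\log N)^{O(1)})$, as claimed. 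The main obstacle is thus not any difficult estimate but simply verifying this domination---confirming that the maximum collision count is swamped by the logarithmic FFT factor under $d=O(\frac{N}{K})$.
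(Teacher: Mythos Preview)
Your proposal is correct and follows essentially the same approach as the paper: the lemma there is stated as a summary of the preceding paragraph, which identifies exactly the P1 cost $O(2a_m\frac{N}{d}\log\frac{N}{d})$ from the $2a_m$ length-$\frac{N}{d}$ FFTs and the P2 cost $O(\frac{N}{d}a_m^2)$ (resp.\ $O(\frac{N}{d}a_m(\log\log N)^{O(1)})$) from Steps (i)--(iii), then adds them. If anything, you are slightly more careful than the paper in explicitly justifying why the $O(\frac{N}{d}a_m^2)$ contribution is absorbed into the FFT term; the paper simply asserts the combined bound without spelling out that domination.
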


So far, our method of solving all downsampled frequencies is based on fixing downsampling factor $d$ (and $a_{m}$), as an example illustrated in Fig. \ref{fig:Iterative Pyramid} (b).
In this case, we call this approach, non-iterative sFFT-DT.
Its iterative counterpart, iterative sFFT-DT,
will be described later in Sec. \ref{ssec:ItMethod} and Sec. \ref{ssec:exact_algorithm}.



\subsection{Analysis}\label{ssec:howtochoose_a_and_d}

In this section, we first will study the relationship between $a_{m}$ and $d$, and analyze the probability of a downsampled frequency with number of collisions larger than $a_m$.
Second, we will discuss computational complexity and recovery performance of our non-iterative sFFT-DT.
Third, we will compare non-iterative sFFT-DT with Ghazi {\em et al.}'s sFFT \cite{Ghazi2013}.
In addition, the Big-O constant of complexity is induced in order to highlight the computational simplicity of non-iterative sFFT-DT.
Finally, we will conclude by presenting an iterative sFFT-DT approach to reduce computational cost further.

\subsubsection{Relationship between Maximum Number of Collisions and Downsampling Factor}
Now, we consider the relationship between $a_m$ and $d$.
If $a_{m}$ is set to $d$, then we always can recover any $\bm{\hat{x}}$ without errors but the computational cost
will be larger than that of FFT.
Thus, it is preferable to set smaller $a_{m}$, which is still feasible when $\bm{\hat{x}}$ is uniformly distributed.
For each frequency, the number of collisions, $a$, will be small with higher probability if $\frac{dK}{N}$ is small enough, as Lemma \ref{lemma:probability of bin} illustrates
\begin{lemma}
\label{lemma:probability of bin}
Suppose $K$ non-zero entries distribute uniformly ({\em i.e.}, with probability $\frac{K}{N}$) in $\bm{\hat{x}}$.
Let $Pr(d,a_{m})$ denote the probability that there is at least a downsampled frequency with number of collisions $a > a_{m}$ when the downsampling factor is $d$.
Then, $Pr(d,a_{m})\leq \frac{N}{d} (\frac{deK}{N(a_{m}+1)})^{a_{m}+1}$, where $e$ is Euler's.
And non-iterative sFFT-DT obtains perfect recovery with probability at least $\rho=1-Pr(d,a_{m})$.
\end{lemma}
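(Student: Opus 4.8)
The plan is to reduce the global event---``some downsampled frequency has more than $a_m$ collisions''---to a single-bin tail estimate followed by a union bound over the $\frac{N}{d}$ bins. First I would observe from Eq. (\ref{eq:xd}) that the $k$-th downsampled frequency $\hat{x}_d[k]$ is formed from exactly the $d$ original frequencies indexed by $S_k=\{k,\,k+\frac{N}{d},\dots,k+(d-1)\frac{N}{d}\}$, and that the index sets $S_0,\dots,S_{\frac{N}{d}-1}$ partition $\{0,\dots,N-1\}$. Hence the number of collisions $a$ in bin $k$ is precisely the number of non-zero entries of $\hat{\bm{x}}$ lying in $S_k$. Under the stated uniform model each coordinate of $\hat{\bm{x}}$ is non-zero independently with probability $\frac{K}{N}$, so for each fixed bin $a\sim\mathrm{Binomial}(d,\tfrac{K}{N})$, and the target per-bin quantity is $Pr(a>a_m)=Pr(a\ge a_m+1)$.

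Next I would bound this per-bin tail without summing the full binomial series. Writing $t=a_m+1$, the event $\{a\ge t\}$ is the union, over all $t$-element subsets $T\subseteq S_k$, of the event that every coordinate indexed by $T$ is non-zero; each such event has probability $(\tfrac{K}{N})^{t}$ by independence, so the union bound gives $Pr(a\ge t)\le\binom{d}{t}(\tfrac{K}{N})^{t}$. Applying the elementary inequality $\binom{d}{t}\le(de/t)^{t}$ (which follows from $t!\ge(t/e)^{t}$) yields the clean single-bin estimate
\begin{equation}
Pr(a\ge a_m+1)\le\left(\frac{deK}{N(a_m+1)}\right)^{a_m+1}.
\label{eq:perbin-plan}
\end{equation}
Finally I would combine the $\frac{N}{d}$ bins by another union bound: since failure means at least one bin violates $a\le a_m$, summing Eq. (\ref{eq:perbin-plan}) over all bins gives $Pr(d,a_m)\le\frac{N}{d}\bigl(\frac{deK}{N(a_m+1)}\bigr)^{a_m+1}$, which is the claimed inequality. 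For the recovery statement I would invoke the fact established just before Lemma \ref{lemma:cost of non-iterative sFFT-DT}: with $2a_m$ syndromes, the syndrome decoding of Eqs. (\ref{eq:auxiliary fun of MPT})--(\ref{eq:poly fun of MPT}) exactly resolves the locations and values of every bin whose collision count satisfies $a\le a_m$. Thus non-iterative sFFT-DT recovers $\hat{\bm{x}}$ perfectly on the complement of the failure event, i.e. with probability at least $1-Pr(d,a_m)=\rho$.

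The step I expect to be the main obstacle is the per-bin tail bound, specifically choosing an estimate tight enough to reproduce the stated constant $\frac{deK}{N(a_m+1)}$ rather than a looser Chernoff-type bound; the subset-union argument is exactly what delivers the sharp factor $\binom{d}{a_m+1}$. This argument is also pleasantly robust to how ``uniform'' is modelled: under the alternative reading that $\hat{\bm{x}}$ carries exactly $K$ non-zeros placed uniformly, the probability that a fixed $t$-set is entirely non-zero is $\prod_{i=0}^{t-1}\frac{K-i}{N-i}\le(\tfrac{K}{N})^{t}$, so Eq. (\ref{eq:perbin-plan}) and the final bound persist verbatim. Note finally that independence \emph{across} bins is never required---the outer union bound uses only the marginal per-bin probabilities---so no care is needed regarding the correlations that a fixed total sparsity would induce.
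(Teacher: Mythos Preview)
Your proposal is correct and follows essentially the same route as the paper: bound the per-bin tail by $\binom{d}{a_m+1}(K/N)^{a_m+1}$ (the paper states this inequality directly; your subset-union argument is precisely its justification), apply $\binom{d}{t}\le (de/t)^t$ via $t!\ge (t/e)^t$, and then union-bound over the $N/d$ bins. Your added remarks on model robustness and on why bin-to-bin independence is unnecessary are sound refinements but do not change the argument.
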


\begin{proof}
For each downsampled frequency, the probability of $a > a_{m}$ is ${ \sum_{i=a_{m}+1}^{d} \binom{d}{i} (\frac{K}{N})^{i} (1-\frac{K}{N})^{d-i}} $, which is smaller than $\binom{d}{a_{m}+1} (\frac{K}{N})^{a_{m+1}}$.
Under this circumstance, the probability of at least a downsampled frequency with $a > a_{m}$ is  bounded  by $\binom{\frac{N}{d}}{1} \binom{d}{a_{m}+1} (\frac{K}{N})^{a_{m+1}}$.
Thus, we can derive:
\small
\begin{equation}
\begin{aligned}
Pr(d,a_{m}) &\leq \frac{N}{d} \binom{d}{a_{m}+1} (\frac{K}{N})^{a_{m+1}}\leq \frac{N}{d} (\frac{dK}{N})^{a_{m}+1}  \frac{1}{(a_{m}+1)!} \\
&\leq \frac{N}{d} (\frac{dK}{N})^{a_{m}+1}  (\frac{e}{a_{m}+1})^{a_{m}+1} = \frac{N}{d} (\frac{deK}{N(a_{m}+1)})^{a_{m}+1}.
\end{aligned}
\label{Eq: lemma:probability of bin}
\end{equation}
\end{proof}
\normalsize
The probability that $\bm{\hat{x}}$ can be perfectly reconstructed using sFFT-DT is $1-Pr(d,a_{m})$ since $a > a_{m}$ results in the fact that the syndrome decoding cannot attain the correct values and locations in the frequency domain.
Furthermore, since $Pr(d,a_{m})$ is controlled by $\frac{N}{d}$, $K$, and $a_{m}$, it can be very low based on an appropriate setting.
Let $\displaystyle N^{+}=\frac{N}{dK}$ denote the ratio of the length ($\displaystyle \frac{N}{d}$) of a downsampled signal to $K$.
Our empirical observations, shown in Fig. \ref{fig:aliasing}, indicate the probability of collisions at different $N^{+}$'s.
For $a>4$, the probability of collisions is very close to $0$.
\begin{figure}[h]
  \centering{\epsfig{figure=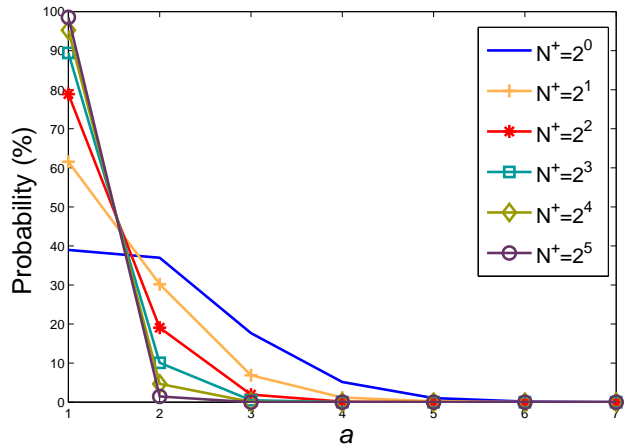,width=3.7in}}
\hfill
\caption{The probability of collisions for $1\leq a\leq 7$ at different $N^{+}$'s, where $a$ denotes the number of collisions. The results show that $a > 2$, in fact, seldom occurs.}
\label{fig:aliasing}
\end{figure}


\subsubsection{Computational Cost and Recovery Performance}
According to computational cost in Lemma \ref{lemma:cost of non-iterative sFFT-DT} and probability for perfect recovery in Lemma \ref{lemma:probability of bin}, we have Theorem \ref{theorem:probability of sfftdt v1}.
\begin{theorem}
\label{theorem:probability of sfftdt v1}
If non-zero frequencies of $\bm{\hat{x}}$ distribute uniformly, given $a_{m}$ and $d$, sFFT-DT perfectly recovers $\bm{\hat{x}}$ with the probability at least $\rho = 1-\frac{N}{d} (\frac{deK}{N(a_{m}+1)})^{a_{m}+1}$ and the computational cost $O(a_{m}\frac{N}{d}\log \frac{N}{d})$ for $a_{m}\leq 4$ and $O(a_{m}\frac{N}{d}\log \frac{N}{d}+a_{m}\frac{N}{d}(\log \log N)^{O(1)})$ for $a_{m} > 4$.
\end{theorem}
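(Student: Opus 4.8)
The plan is to recognize that Theorem \ref{theorem:probability of sfftdt v1} is a consolidation of the two preceding lemmas, so I would prove it by cleanly separating the recovery-probability claim from the computational-cost claim and invoking each lemma in turn. First I would pin down the event that governs correct reconstruction. Let $\mathcal{E}$ denote the event that every downsampled frequency has at most $a_{m}$ colliding terms, i.e., $a \leq a_{m}$ for every $k \in [0,\frac{N}{d}-1]$. The syndrome-decoding argument of Sec. \ref{ssec:MPT} shows that, conditioned on $\mathcal{E}$, the $2a_{m}$ syndromes $m_{0},\ldots,m_{2a_{m}-1}$ obtained from the shifted downsampled transforms uniquely determine the roots $z_{j}$ (hence the locations $s_{j}\in S_{k}$) through the orthogonal polynomial $P(z)$, and then determine the coefficients $p_{j}$ (hence the values $\hat{x}[s_{j}]$) through the Vandermonde inversion of Step (iii). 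Thus $\mathcal{E}$ is exactly the event on which non-iterative sFFT-DT perfectly recovers $\bm{\hat{x}}$.

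Second, I would bound the probability of $\mathcal{E}$ from below. Lemma \ref{lemma:probability of bin} already supplies the complementary bound $Pr(d,a_{m})\leq \frac{N}{d}(\frac{deK}{N(a_{m}+1)})^{a_{m}+1}$ under the assumption that the non-zero frequencies are uniformly distributed, and identifies $Pr(d,a_{m})$ precisely as the probability of the complement $\mathcal{E}^{c}$. Taking complements immediately yields $Pr(\mathcal{E})\geq 1-\frac{N}{d}(\frac{deK}{N(a_{m}+1)})^{a_{m}+1}=\rho$, which is the asserted recovery probability.

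Third, for the computational cost I would observe that the operations performed by non-iterative sFFT-DT are identical whether or not $\mathcal{E}$ occurs: the algorithm always computes $2a_{m}$ shifted downsampled transforms by FFT and then runs Steps (i)--(iii) of syndrome decoding on each of the $\frac{N}{d}$ frequencies. Lemma \ref{lemma:cost of non-iterative sFFT-DT} already accounts for exactly these two contributions, namely the P1 cost of generating syndromes and the P2 cost of solving, giving $O(a_{m}\frac{N}{d}\log \frac{N}{d})$ for $a_{m}\leq 4$ and the extra root-finding term $O(a_{m}\frac{N}{d}(\log\log N)^{O(1)})$ for $a_{m}>4$. Combining this cost bound with the probability bound completes the theorem.

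The step I expect to demand the most care is the first: establishing that $\mathcal{E}$ is both necessary and sufficient for correct recovery. Sufficiency rests on the solvability of the syndrome system -- that $2a_{m}$ syndromes determine a degree-$a_{m}$ polynomial with the correct distinct roots -- while the word ``perfectly'' also tacitly requires that the decoder report no spurious collisions on frequencies where $a<a_{m}$, so the convention of padding with zero coefficients must be handled consistently. Everything beyond this is essentially bookkeeping, since the two quantitative bounds are inherited verbatim from Lemmas \ref{lemma:probability of bin} and \ref{lemma:cost of non-iterative sFFT-DT}.
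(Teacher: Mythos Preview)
Your proposal is correct and follows exactly the route the paper takes: the paper does not give a standalone proof of Theorem~\ref{theorem:probability of sfftdt v1} but simply states, immediately before the theorem, that it follows from the computational cost in Lemma~\ref{lemma:cost of non-iterative sFFT-DT} together with the probability of perfect recovery in Lemma~\ref{lemma:probability of bin}. Your write-up is in fact more detailed than the paper's own treatment, since you explicitly identify the event $\mathcal{E}$ and discuss why syndrome decoding succeeds on it, whereas the paper leaves this implicit.
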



Based on different parameter settings in Theorem \ref{theorem:probability of sfftdt v1}, we can further distinguish our sFFT-DT from Ghazi {\em et al.}'s sFFT \cite{Ghazi2013} in terms of recovery performance and computational cost as follows.
(sFFT-DT): Set $a_{m}=4$ and $d=O(\frac{N}{K})$.
We have the probability of perfect recovery, $\rho= 1-O(K)O(\frac{e}{5})^{5}$, and computational cost, $O(K \log K)$.\\
(Ghazi {\em et al.}'s sFFT): Set $a_{m}={C\log K}$ and $d = O(\frac{N\log K}{K})$.
We have $\rho = 1 - O(\frac{1}{K}^{0.5C\log C})$ and computational cost $O(K \log K + K(\log\log N)^{O(1)})$. \\
Furthermore, Ghazi {\em et al.}'s sFFT aims to maximize the performance without the constraint of $a_{m} \leq 4$.
Thus, it requires to use an extra root finding algorithm \cite{Pan2002} with complexity being related to the signal length $N$.

On the contrary, sFFT-DT achieves the ideal computational cost, which is independent of $N$, but with the lower bound of successful probability degrading to $0$ for large $K$.
Under this circumstance, sFFT-DT is seemingly unstable.
Nevertheless, if we consider the recovery performance in terms of energy, sFFT-DT can guarantee that most of frequencies are estimated correctly, as Theorem \ref{theorem:probability of sfftdt v2} indicates.
To prove this, we first define some parameters here.
Let $d=\frac{N}{\mu K}$, where $\mu \in \mathbb{N}$ is the user-defined parameter, and let $\tau \in (0,1]$ with $(\frac{\tau}{\mu}-\frac{1}{\mu K})\times 100\%$ representing the proportion of frequencies that cannot be successfully recovered.

\begin{theorem}
\label{theorem:probability of sfftdt v2}
If non-zero frequencies of $\bm{\hat{x}}$ distribute uniformly, given $a_{m}$ and $d=\frac{N}{\mu K}$, sFFT-DT recovers at least $(1-(\frac{\tau}{\mu}-\frac{1}{\mu K}))N$ frequencies of $\bm{\hat{x}}$ with the probability at least $\rho = 1- \left(\frac{d^{a_{m}}K^{a_{m}}  e^{a_{m}+2}}{\tau N^{a_{m}} (a_{m}+1)^{a_{m}+1}}\right)^{\tau K}$, and computational cost $O(a_{m}\frac{N}{d}\log \frac{N}{d})$ for $a_{m}\leq 4$ and $O(a_{m}\frac{N}{d}\log \frac{N}{d}+a_{m}\frac{N}{d}(\log \log N)^{O(1)})$ for $a_{m} > 4$.
\end{theorem}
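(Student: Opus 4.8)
The plan is to upgrade the ``all-or-nothing'' guarantee of Theorem \ref{theorem:probability of sfftdt v1} to a fractional one by \emph{counting bad bins} instead of demanding that every bin be resolvable. Under $d=\frac{N}{\mu K}$ there are $\frac{N}{d}=\mu K$ downsampled frequencies (bins), each aggregating $d$ positions of $\bm{\hat{x}}$, and a bin is resolved precisely when its number of collisions obeys $a\leq a_{m}$. I would call a bin \emph{bad} when $a>a_{m}$ and let $B$ be the number of bad bins. A resolved bin returns all $d$ of its positions correctly, while a bad bin can corrupt at most all $d$ of them; hence the number of frequencies that fail to be recovered is at most $Bd$.

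The first real step is to translate the target proportion into a threshold on $B$. Since $N=d\mu K$, one has $\left(\frac{\tau}{\mu}-\frac{1}{\mu K}\right)N=d(\tau K-1)$, so the event ``at least $\left(1-(\frac{\tau}{\mu}-\frac{1}{\mu K})\right)N$ frequencies are recovered'' is implied by $Bd\leq d(\tau K-1)$, i.e.\ by $B\leq\tau K-1$. The failure event is therefore contained in $\{B\geq\tau K\}$, and it suffices to bound $\Pr[B\geq\tau K]$.

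Next I would invoke the uniform-distribution hypothesis, under which each of the $N$ positions is independently non-zero with probability $\frac{K}{N}$; consequently the occupancies of distinct bins are independent and each bin is bad with probability at most $p:=\binom{d}{a_{m}+1}\left(\frac{K}{N}\right)^{a_{m}+1}\leq\left(\frac{deK}{N(a_{m}+1)}\right)^{a_{m}+1}$, which is exactly the per-bin bound derived inside the proof of Lemma \ref{lemma:probability of bin}. A union bound over which $\tau K$ of the $\mu K$ bins are simultaneously bad, together with $\binom{n}{k}\leq(en/k)^{k}$ for $n=\mu K$, $k=\tau K$, then gives
\begin{equation}
\Pr[B\geq\tau K]\leq\binom{\mu K}{\tau K}\,p^{\tau K}\leq\left(\frac{e\mu}{\tau}\right)^{\tau K}\left(\frac{deK}{N(a_{m}+1)}\right)^{(a_{m}+1)\tau K}.
\end{equation}
Collapsing the two factors into a single $\tau K$-th power and substituting $\mu=\frac{N}{dK}$ reduces the base to $\frac{d^{a_{m}}K^{a_{m}}e^{a_{m}+2}}{\tau N^{a_{m}}(a_{m}+1)^{a_{m}+1}}$, yielding the stated lower bound $\rho$ on the success probability. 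The computational-cost claim is inherited verbatim from Lemma \ref{lemma:cost of non-iterative sFFT-DT}, since the procedure run here is identical and depends only on $a_{m}$ and $d$.

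The step I expect to demand the most care is the bookkeeping of the first two paragraphs: justifying that a bad bin corrupts at most $d$ positions (so that $Bd$ is a genuine upper bound on the count of unrecovered frequencies) and verifying that the threshold $B=\tau K$ aligns exactly with the advertised proportion $\frac{\tau}{\mu}-\frac{1}{\mu K}$ through the identity $N=d\mu K$. The probabilistic part is then a routine union bound whose only structural input is the independence of bin occupancies, precisely the uniform-distribution assumption already exploited in Lemma \ref{lemma:probability of bin}.
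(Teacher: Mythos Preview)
Your proposal is correct and follows essentially the same route as the paper's own proof: define a bad bin as one with $a>a_{m}$, bound $\Pr[B\geq f]$ by $\binom{N/d}{f}\,p^{f}$ with the per-bin probability $p$ from Lemma \ref{lemma:probability of bin}, set $f=\tau K$, apply $\binom{n}{k}\leq(en/k)^{k}$ (the paper writes this as $\binom{n}{k}\leq n^{k}/k!$ followed by $1/k!\leq(e/k)^{k}$), and then convert $f-1$ bad bins into $(f-1)d=(\frac{\tau}{\mu}-\frac{1}{\mu K})N$ unrecovered positions. Your exposition is in fact a bit more explicit than the paper's about the bin-to-position bookkeeping, but the argument is the same.
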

\begin{proof}
We extend $Pr(d,a_{m})$ derived in Lemma \ref{lemma:probability of bin} as $Pr(d,a_{m},f)$ to represent the probability that at least $f$ frequencies with $a > a_{m}$ is derived as:
\begin{equation}
\begin{aligned}
Pr(d,a_{m},f) & \leq \binom{\frac{N}{d}}{f}  \left(\binom{d}{a_{m}+1} (\frac{K}{N})^{a_{m+1}}\right)^{f} \\
  & \leq \frac{1}{f!}(\frac{N}{d})^{f} \left( (\frac{deK}{N(a_{m}+1)})^{(a_{m}+1)}\right)^{f} \\
  & \leq \left(\frac{K}{f}(\frac{dK}{N})^{a_{m}} \frac{e^{a_{m}+2}}{(a_{m}+1)^{a_{m}+1}}\right)^{f}.
\end{aligned}
\label{Eq: tauK entries}
\end{equation}
Let $f = \tau K$ and plug it in Eq. (\ref{Eq: tauK entries}).
We obtain the result that at least $f$ frequencies of $\bm{\hat{x}}_{d}$ cannot be solved with probability $Pr(d,a_{m},f) \leq \left(\frac{d^{a_{m}}K^{a_{m}}  e^{a_{m}+2}}{\tau N^{a_{m}} (a_{m}+1)^{a_{m}+1}}\right)^{\tau K} $.
In other words, there are at most $(f-1)d=(\frac{\tau}{\mu}-\frac{1}{\mu K})N$ frequencies of $\bm{\hat{x}}$ that cannot be solved with probability $\rho = 1 - Pr(d,a_{m},f)$.
We complete this proof.
\end{proof}
By choosing appropriate $\mu$ and $\tau$, sFFT-DT performs better with successful probability converging to $1$ when $K$ increases, implying that it can work for $K=O(N)$.
For example, by setting $a_{m}=4$ and $d=\frac{N}{\mu K}$, where $\mu$ is $4$, we have $\rho \approx 1-\left(\frac{5\times 10^{-4}}{\tau}\right)^{\tau K}$.
In this case, let $\tau=10^{-2}$ and it means that sFFT-DT correctly recovers at least $99.0\%$ frequencies with probability at least $\delta=1-(\frac{1}{20})^{\tau K}$, which converges to $1$ when $\tau K$ is large enough.

In addition, we further analyze the practical cost of additions and multiplications in detail along with the Big-O constants of computational complexity and find that the Big-O constants in Ghazi's sFFT are larger than those in sFFT-DT.
More specifically, recall that the computational cost of sFFT-DT is composed of two parts:
performing FFTs for obtaining syndromes (P1 cost) and solving Steps (i), (ii), and (iii) of syndrome decoding (P2 cost).
Since $d = \frac{N}{4K}$ was set in our simulations, the Big-O constants for FFT are $96$ for addition and $64$ for multiplication\footnote{Recall that the P1 cost is $2a_{m}\frac{N}{d}\log \frac{N}{d}$. Under the situation that $a_{m}$ is $4$ and $\frac{N}{d}= 4K$, the Big-O constant is $ 2*4*4*3 = 93 $, where $3$ comes from the constant of additions of FFT \cite{Saidi1994}.}.
Since the P2 cost in sFFT-DT is relatively smaller than the P1 cost, it is ignored.

In contrast to sFFT-DT, the Big-O constants of the P1 cost in Ghazi's sFFT \cite{Ghazi2013} are about $6C$ for addition and $4C$ for multiplication ($C$ must be larger than or equal to $2$; otherwise Ghazi {\em et al.}'s sFFT cannot work).
Nevertheless, the Big-O constants of one of the Steps (i) and (iii) within the P2 cost need about $96$ for addition and $160$ for multiplication (the detailed cost analysis is based on \cite{NChen2008}). Even though we do not take Step (ii) into account due to the lack of detailed analysis, the Big-O constants for multiplication in sFFT-DT are far smaller than those of Ghazi {\em et al.}'s sFFT, especially for multiplications.
In addition, for hardware implementation, Ghazi {\em et al.}'s sFFT is more complex than sFFT-DT (due to its analytical solution) because an extra numerical procedure for root finding is required and the computational cost involves $N$.
We conclude that there are two main advantages in sFFT-DT, compared to Ghazi's sFFT \cite{Ghazi2013}.
First, the Big-O constants of sFFT-DT are smaller than those of Ghazi {\em et al.}'s sFFT.
Second, our analytical solution is hardware-friendly in terms of implementation.

On the other hand, when the signal is not so sparse with $K$ approaching $N$ ({\em e.g.}, $  K=\frac{N}{8}$ and $d=O(\frac{N}{K})$), the cost of $8$ FFTs in a downsampled signal is almost equivalent to that of one FFT in the original signal.
To further reduce the cost, a top-down iterative strategy is proposed in Sec. \ref{ssec:ItMethod}.

It also should be noted that the above discussions (and prior works) are based on the assumption that $K$ is known.
In practice, $K$ is unknown in advance.
Unfortunately, how to automatically determine K is ignored in the literature.
Instead of skipping this problem, in this paper, we present a simple but effective strategy in Sec. \ref{ssec:how to decide K} to address this issue.

\subsection{Top-Down Iterative Strategy for Iterative sFFT-DT}\label{ssec:ItMethod}
In this section, an iterative strategy is proposed to solve the aliasing problem with an iterative increase of the downsampling factor $d$ according to our empirical observations that the probability of aliasing decreases
fast with the increase of $a$ and the fact that when $d$ is increased, $a$ is increased as well.
The idea is to solve downsampled frequencies from $a=1$ to $a=a_{m}$ iteratively.
During each iteration, the solved frequencies are subtracted from $\hat{x}_{d}$ to make $\hat{x}_{d}$ more sparse.
Under this circumstance, $d$ subsequently is set to be larger values to reduce computational cost without sacrificing the recovery performance.
Fig \ref{fig:Iterative Pyramid} illustrates such an example.
In Fig. \ref{fig:Iterative Pyramid}(b), if we try to solve all aliasing problems in the first iteration, $4$ FFTs are required, since the maximum value of $a$ is $2$.
On the other hand, if we first solve the downsampled frequencies with $a=1$ (at normalized frequency = $\pi$), it costs $2$ FFTs, as shown in Fig. \ref{fig:Iterative Pyramid}(c).
Since $2$ FFTs are insufficient for solving the aliasing problem completely under $a=2$, extra $2$ FFTs are required to solve a more ``sparse'' signal.

The key is how to calculate the $2$ extra FFTs in the above example with lower cost.
Since a more sparse signal is generated by subtracting the solved frequencies from $\hat{x}_{d}$, $d$ can be set to be larger to further decrease the cost of FFT.
As shown in Fig. \ref{fig:Iterative Pyramid}(d), $2$ extra FFTs can be done quickly with a larger $d$ (=$4$) to solve the downsampled frequency (at normalized frequency = $0\pi$) with $a=2$.
Consequently, $d$ is doubled iteratively in our method and the total cost is dominated by that required at the first iteration.

The proposed method with the top-down iterative strategy is called iterative sFFT-DT.

\subsection{Iterative sFFT-DT: Algorithm for Exactly K-Sparse Signals}\label{ssec:exact_algorithm}

In this section, our method, iterative sFFT-DT, is developed and is depicted in Algorithm \ref{Table:our algorithm}, which is composed of three functions, \textbf{main}, \textbf{SubFreq}, and \textbf{SynDec}.
Basically, iterative sFFT-DT solves downsampled frequencies from $a_{m}=1$ to $4$ with an iterative increase of $d$.
Note that, its variation, non-iterative sFFT-DT, solves all downsampled frequencies with $a_{m} = 4$ and fixed $d$.

At the initialization stage, the sets $S$ and $T$, recording the positions of solved and unsolved frequencies, respectively, are set to be empty.
$a_m = 4$ and $d = \frac{N}{4K}$ are initialized.
The algorithmic steps are explained in detail as follows.

Function \textbf{main}, which is executed in a top-down manner by doubling the downsampling factor iteratively, is depicted from Line 1 to Line 16.
In Lines 3-4, the input signal $\bm{x}$ is represented by two shift factors $2l$ and $2l+1$.
Then they are used to perform FFT to obtain $\bm{\hat{x}_{d,2l}}$ and $\bm{\hat{x}_{d,2l+1}}$ in Lines 5-6.
In Line 7, the function \textbf{SubFreq}, depicted between Line 17 and Line 22, is executed to remove frequencies from $\bm{\hat{x}_{d,2l}}$ and $\bm{\hat{x}_{d,2l+1}}$ that were solved in previous iterations.
The goal of function \textbf{SubFreq} is to make the resulting signal more sparse.

Line 9 in function \textbf{main} is used to judge if there are still unsolved frequencies.
In particular, the condition $\hat{x}_{d,l}[k]=0$, initially defined in Eq. (\ref{eq:xd with shift}), may imply:
1) $  \hat{x}[k+j\frac{N}{d}]$'s for all $j \in [0,d-1]$ are zero, meaning that there is no unsolved frequency and
2) $  \hat{x}[k+j\frac{N}{d}]$'s are non-zero but their sum is zero, meaning that there exist unsolved frequencies.
To distinguish both, $|\hat{x}_{d,j}[k]| > 0$ for $j\in [0,2l+1]$ is a sufficient condition.
More specifically, if $a$ is less than or equal to $2l+2$, it is enough to distinguish both by checking whether any one of the $2l+2$ equations is not equal to $0$.
If yes, it implies that at least a frequency grid is non-zero; otherwise, all $  \hat{x}[k+j\frac{N}{d}]$'s are definitely zero.
Moreover, Line 9 is equivalent to checking $2l + 2$ equations at the $l$'th iteration.
At $l=0$, two equations ($ \hat{x}_{d,0}[k] $ and $ \hat{x}_{d,1}[k] $) are verified to ensure that all frequencies with $a\leq2$ are distinguished.
At $l=1$, if $k \in T$, it is confirmed that $  \hat{x}[k+j\frac{N}{d}]$'s are non-zero at the previous iteration.
On the contrary, if $k \notin T$, extra 2 equations ($ \hat{x}_{d,2}[k] $ and $ \hat{x}_{d,3}[k] $) are added to ensure that all frequencies with $a\leq4$ are distinguished.
Thus, at the $l$'th iteration, there are in total $2l+2$ equations checked.

In Line 11, the function \textbf{SynDec}, depicted in Lines 23-35 (which was described in detail in Sec. \ref{ssec:MPT}), solves frequencies when aliasing occurs. sFFT-DT iteratively solves downsampled frequencies from $a=1$ to $a\leq a_{m}=4$.
Nevertheless, we do not know $a$'s in advance. For example, it is possible that some downsampled frequencies with $a=4$ are solved in the first three iterations, and these solutions definitely fail.
In this case, the solved locations do not belong to $  S_{k}$ (defined in Sec. \ref{ssec:DFT Property}). On the contrary, if the downsampled frequency is solved correctly, the locations must belong to $S_{k}$.
Thus, by checking whether or not the solution satisfies the condition, $s_{j}\;mod\;d=k$ for all $j \in [0,l]$ (Line 30), we can guarantee that all downsampled frequencies are solved under correct $a$'s.
Finally, the downsampling factor is doubled, as indicated in Line 14, to solve the unsolved frequencies in an iterative manner.
This means that the downsampled signal in the next iteration will become shorter and can be dealt faster than that in the previous iterations.

\begin{algorithm}[!htb]
\fontsize{11pt}{0.9em}\selectfont
\setlength{\abovecaptionskip}{0pt}
\setlength{\belowcaptionskip}{0pt}
\caption{Iterative sFFT-DT for exactly $K$-sparse signals.}
\label{Table:our algorithm}
\begin{tabular}[t]{p{17.7cm}l}
\textbf{Input:} $\bm{x}$, $K$;\quad \textbf{Output:} $\bm{\hat{x}}$; \\
\textbf{Initialization:} $\bm{\hat{x}}=\mathbf{0}$, $  d=\frac{N}{4K}$, $S=\{ \}$, $T=\{ \}$, $a_{m}=4$; \\
\hline\hline
01. \textbf{function} \textbf{main}()\\
02. \quad\textbf{for} $l = 0$ to $a_{m} -1$ \\
03. \quad\quad $x_{d,2l}[k]=x[dk+2l]$ for $  k\in [0,\frac{N}{d}-1]$;\\
04. \quad\quad $x_{d,2l+1}[k]=x[dk+2l+1]$ for $  k\in [0,\frac{N}{d}-1]$;\\
05. \quad\quad $\bm{\hat{x}_{d,2l}}=\mathbf{FFT}( \bm{x_{d,2l}} )\times d$;\\
06. \quad\quad $\bm{\hat{x}_{d,2l+1}}=\mathbf{FFT}( \bm{x_{d,2l+1}} )\times d$;\\
07. \quad\quad $\mathbf{SubFreq}(\bm{\hat{x}_{d,2l}},\bm{\hat{x}_{d,2l+1}},\bm{\hat{x}},d,l,S)$;\\
08. \quad\quad \textbf{for} $k = 0$ to $  \frac{N}{d}-1$\\
09. \quad\quad \quad \textbf{if} ($k \in T$ or $|\hat{x}_{d,2l}[k]| > 0$ or $|\hat{x}_{d,2l+1}[k]| > 0$) \\
10. \quad\quad \quad \quad $m_{j}= \hat{x}_{d,j}[k]$ for $j \in [0,2l+1]$; \\
11. \quad\quad \quad \quad $\mathbf{SynDec}(\bm{m},l,d,k,\bm{\hat{x}},S,T)$;\\
12. \quad\quad \quad \textbf{end if}\\
13. \quad\quad \textbf{end for} \\
14. \quad\quad $d=2d$; \\
15. \quad\quad All elements in $T$ modulo $  \frac{N}{d}$.\\
16. \quad\textbf{end for}\\
17. \textbf{function} \textbf{SubFreq} $(\bm{\hat{x}_{d,2l}},\bm{\hat{x}_{d,2l+1}},\bm{\hat{x}},d,l,S)$\\
18. \quad \textbf{for} $k \in S$\\
19. \quad \quad $k_{d}=k $ mod $   \frac{N}{d}$;\\
20. \quad \quad $   \hat{x}_{d,2l}[k_{d}]=\hat{x}_{d,2l}[k_{d}]- \hat{x}[k]e^{\frac{i2\pi k(2l)}{N}} $;\\
21. \quad \quad $   \hat{x}_{d,2l+1}[k_{d}]=\hat{x}_{d,2l+1}[k_{d}]- \hat{x}[k]e^{\frac{i2\pi k(2l+1)}{N}} $;\\
22. \quad \textbf{end for}\\
23. \textbf{function} \textbf{SynDec} $(\bm{m},l,d,k,\hat{x},S,T)$\\
24. \quad \textbf{if} $l=0$\\
25. \quad \quad $  z_{0}= ( \frac{m_{1}}{m_{0}})$;\;\;$p_{0}= m_{0}$;\\
26. \quad \textbf{else} \\
27. \quad \quad  Solve the aliasing problem with $a=l+1$ by \\
    \quad \quad \quad \;\;syndrome decoding, described in Sec. \ref{ssec:MPT}. \\
28. \quad \textbf{end if} \\
29. \quad $s_{j}= (\ln z_{j})N/i2\pi $ for all $j \in [0,l] $; \\
30. \quad  \textbf{if} ($s_{j}\;mod\;d$) $ = k$ for all $j \in [0,l] $\\
31. \quad \quad $S=S\cup \bm{s}$;\\
32. \quad \quad $\hat{x}[s_{j}]=p_{j}$ for all $j \in [0,l]$;\\
33. \quad \textbf{else} \\
34. \quad \quad  $T=T\cup \bm{s}$;\\
35. \quad  \textbf{end if}\\
\hline
\end{tabular}
\end{algorithm}
\normalsize

\subsection{Performance and Computational Complexity of Iterative sFFT-DT}\label{ssec:conv_analysis}
We first discuss the complexity of iterative sFFT-DT.
The cost of the outer loop in function \textbf{main} (Steps 5 and 6) is bounded by two FFTs.
As mentioned in Theorem \ref{theorem:probability of sfftdt v2}, $d$ is set to be $  \frac{N}{4K}$, the dimensions of $x_{d,2l}$ and $x_{d,2l+1}$ are $O(K)$, and FFT costs $O(K \log K)$ in the first iteration.
Since $d$ is doubled iteratively, the total cost of  $a_m$ iterations is still bounded by $O(K \log K)$.
In addition, the function \textbf{SubFreq} costs $O(K)$ operations due to $|S|\leq K$.

The inner loop of the function \textbf{main} totally runs $O(K)$ times, which is not related to the outer loop, since at most $K$ frequencies must be solved.
The cost at each iteration is bounded by the function \textbf{SynDec}.
Recall that the P2 cost, as described in Sec. \ref{ssec:MPT}, requires $  O(\frac{N}{d}a^{2})$.
More specifically, since $d$ is doubled iteratively, $  \frac{N}{d}$ can be derived to depend on $  \frac{O(K)}{2^{l}}$ from the initial setting $  d=O(\frac{N}{K})$.
Therefore, \textbf{SynDec} at the $l$'th iteration costs $  O(\frac{K}{2^l}(l+1)^{2})$ and requires  $  O(\frac{K}{2^0}1^{2}+\frac{K}{2^1}2^{2}+...+\frac{K}{2^{3}}4^{2} ) \leq O(6.25K) = O(K)$ in total.
That is, the inner loop (Steps 8$\sim$13) costs $O(K)$, given an initial downsampling factor of $  d=\frac{N}{4K}$ and $a_{m}=4$.

In sum, the proposed algorithm, iterative sFFT-DT, is dominated by ``FFT'' and costs $O(K \log K)$ operations.
Now, we discuss Big-O constants for operations of addition and multiplication, respectively.
Since $d$ is doubled iteratively, the P1 cost of syndrome decoding gradually is reduced in the later iterations.
The total cost is $  \sum_{i=1}^{a_m} O(2\frac{N}{2^{i-1}d}\log\frac{N}{2^{i-1}d})$, where $a_m=4$.
Due to the fact that iterative sFFT-DT possibly recovers $\bm{\hat{x}}$ with less than $a_m=4$ iterations, the benefit in reducing the computational cost depends on the number of iterations.
In the worst case, the cost is about $  O((2+1+\frac{1}{2}+\frac{1}{4})\frac{N}{d}\log\frac{N}{d})= O(3.75\frac{N}{d}\log\frac{N}{d})$ under $a_m=4$.
Recall that the P1 cost of syndrome decoding in non-iterative sFFT-DT is $  O(2a_m \frac{N}{d}\log\frac{N}{d})$.
With $a_{m}=4$, the Big-O constants in non-iterative sFFT-DT are two times larger than those in iterative sFFT-DT. Similarly, in the best case ({\em i.e.}, $a$'s of all frequencies are $1$), the former is about $4$ times larger than the latter. Thus, it is easy to further infer Big-O constants of iterative sFFT-DT. For instance, since the Big-O constant of addition for non-iterative sFFT-DT is $96$, the Big-O constants for iterative sFFT-DT addition range from $12\times 2=24$ (the best case) to $12 \times 3.75 = 45$ (the worst case) and those for multiplication range from $16$ to $30$.

As for recovery performance in iterative sFFT-DT, since the downsampling factor $d$ is doubled along with the increase of iterations, a question, which naturally arises, is if a larger downsampling factor leads to more new aliasing artifacts.
If yes, these newly generated collisions possibly degrade the performance of iterative sFFT-DT.
If no, the iterative style is good since it reduces computational cost and maintains recovery performance.

In Lemma \ref{lemma:probabilityofaliasing}, we prove that the probability of producing new aliasing artifacts after a sufficient number of iterations will approach zero.

\begin{lemma}
\label{lemma:probabilityofaliasing}
Suppose $K$ non-zero entries of ${\bm\hat{x}}$ distribute uniformly ({$i.e.$}, with probability $  \frac{K}{N}$).
Let $Pr_{l}^{ali}$ be the probability that new aliasing artifacts are produced at the $l$'th iteration in iterative sFFT-DT. Let $K_{l}$ be the number of frequencies with $a \geq  l+1$ at the $l$'th iteration ($0\leq l\leq 3$, $K_{0}=K$). If $  K_{l} \leq \frac{K}{2^{l}}$, we have $Pr_{l}^{ali} < \frac{1}{2^{l+1}}(\frac{d}{N})K^{2}$.
\end{lemma}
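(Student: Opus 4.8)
The plan is to bound $Pr_l^{ali}$ by a union bound over pairs of the frequencies that remain unsolved when the algorithm enters the $l$'th iteration. First I would fix the geometry of the iteration: since the downsampling factor has been doubled $l$ times, at the $l$'th iteration the effective factor is $2^{l}d$ and the downsampled signal has $B_l=\frac{N}{2^{l}d}$ bins, with a frequency at position $s$ being mapped to the bin $s \bmod \frac{N}{2^{l}d}$. A new aliasing artifact can arise only when at least two of the surviving frequencies fall into a common bin at this iteration; hence the event ``a new artifact is produced at iteration $l$'' is contained in the event ``some pair among the $K_l$ surviving frequencies shares a bin,'' and it suffices to bound the latter.

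Next I would estimate the per-pair collision probability. Under the uniform model, the position of each non-zero frequency is spread uniformly over the $N$ grid points, so each residue class modulo $\frac{N}{2^{l}d}$ contains exactly $2^{l}d$ of the $N$ positions; consequently two distinct surviving frequencies land in the same bin with probability at most $\frac{1}{B_l}=\frac{2^{l}d}{N}$. Applying the union bound over the $\binom{K_l}{2}$ pairs then gives
\begin{equation}
Pr_l^{ali} \le \binom{K_l}{2}\frac{2^{l} d}{N} < \frac{K_l^2}{2}\cdot\frac{2^{l} d}{N},
\end{equation}
where the strict inequality uses $\binom{K_l}{2}=\frac{K_l(K_l-1)}{2}<\frac{K_l^2}{2}$ (the case $K_l\le 1$ being trivial). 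Finally I would insert the hypothesis $K_l\le\frac{K}{2^{l}}$, so that $K_l^2\le\frac{K^2}{4^{l}}$, and collect the powers of two:
\begin{equation}
Pr_l^{ali} < \frac{1}{2}\cdot\frac{K^2}{4^{l}}\cdot\frac{2^{l} d}{N} = \frac{1}{2^{l+1}}\Big(\frac{d}{N}\Big)K^2,
\end{equation}
which is exactly the claimed bound.

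The main obstacle I anticipate is not the arithmetic but justifying the probabilistic modelling underlying the two framing steps. The first is arguing that ``new'' aliasing is genuinely captured by the same-bin event for the surviving frequencies: a collision that is truly new at level $l$ is in particular a same-bin collision at level $l$, so bounding by \emph{all} same-bin pairs is a valid (if slightly loose) over-estimate, and indeed one could sharpen it by the conditional factor $\tfrac12$ that such a pair was separated at level $l-1$, which only strengthens the inequality. The second is treating the $K_l$ surviving positions as uniform when computing the per-pair probability, since conditioning on ``unsolved after $l-1$ iterations'' correlates positions; I would resolve this by noting that the same-bin probability depends only on the residues modulo $\frac{N}{2^{l}d}$, which remain uniform, and that the union bound requires no independence across the pairs it sums over.
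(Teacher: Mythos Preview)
Your argument is correct and essentially mirrors the paper's: both bound $Pr_l^{ali}$ by a birthday-type union bound at the doubled downsampling factor $d_l=2^l d$, then substitute $K_l\le K/2^l$ to obtain $2^{l-1}(d/N)K_l^2\le \frac{1}{2^{l+1}}(d/N)K^2$. The only cosmetic difference is that the paper sums over the $\frac{N}{d_l}$ bins with per-bin probability $\binom{d_l}{2}(K_l/N)^2$, whereas you sum over the $\binom{K_l}{2}$ pairs with per-pair probability $\frac{d_l}{N}$; both routes collapse to the same quantity $\frac{d_l K_l^2}{2N}$.
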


\begin{proof}
According to Algorithm \ref{Table:our algorithm}, after the first iteration ($l=0$), all downsampled frequencies with only $a=1$ aliasing term are solved.
Thus, we focus on discussing the probability of producing new aliasing artifacts under $l \geq 1$.
By the same idea of Lemma \ref{lemma:probability of bin}, we can define $  \frac{N}{d}\binom{d}{2}(\frac{K}{N})^{2}$ to be the probability that there is a downsampled frequency with $a \geq 2$ aliasing terms.
In the second iteration. $l=1$, however, some non-zero frequencies have been solved in previous iterations.
Thus, the number of remaining non-zero frequencies are no longer $K$ and $\frac{K}{N}$ and should be modified.
In other words, the number of downsampled frequencies with $a \geq 2$ must be less than $K_{l}$ for $l\geq 1$ and $  \frac{N}{d}\binom{d}{2}(\frac{K_{l}}{N})^{2}$ becomes the upper bound of the probability that there exists a downsampled frequency of producing new aliasing artifacts.

According to our iterative sFFT-DT algorithm, let $d_{l}=2^ld$ for $l\geq 1$.
We can derive:
\small
\begin{equation}
  Pr_{l}^{ali} \leq \frac{N}{d_{l}} \binom{d_{l}}{2} (\frac{K_{l}}{N})^{2} \leq \frac{N}{2^{l}d}(\frac{2^{l}d}{N})^{2}\frac{K^{2}_{l}}{(2)!}\leq 2^{l-1}(\frac{d}{N})K^{2}_{l}.
\label{Eq: Bound of aliasing}
\end{equation}
\normalsize
Eq. (\ref{Eq: Bound of aliasing}) converges to $0$ when $  K_{l}\leq\frac{K}{2^{l}}$. By initializing $d$ properly, almost $K$ frequencies can be solved in the first few iterations. This makes $  K_{l}\leq\frac{K}{2^{l}}$ easy to be satisfied.
Under this circumstance, $d=\frac{N}{4K}$ would be a good choice. By replacing $K_{l}$ with $  \frac{K}{2^{l}}$, we can derive $Pr_{l}^{ali} \leq \frac{1}{2^{l+1}}(\frac{d}{N})K^{2}$.
When $l$ increases to be large enough, the probability of $Pr_{l}^{ali}$ will be small since $  \frac{1}{2^{l+1}}(\frac{d}{N})K^{2} \rightarrow 0$. \end{proof}

Lemma \ref{lemma:probabilityofaliasing} indicates the probability of producing new aliasing artifacts in an asymptotic manner.
This provides us the information that the probability of producing new aliasing finally converges to zero.
In our simulations, we actually observe that the exact probability with new aliasing is very low under $d=\frac{N}{4K}$, implying that the iterative approach can reduce the computational cost and maintain the recovery performance effectively.

\subsection{A Simple Strategy for Estimating Unknown Sparsity $K$ }\label{ssec:how to decide K}
As previously described, the sparsity $K$ of a signal is important in deciding the downsampling factor $d$.
Nevertheless, $K$ is, in general, unknown.
In this section, we provide a simple bottom-up strategy to address this issue.

First, we set a large downsampling factor $d=N$, and then run sFFT-DT.
If there is any downsampled frequency that cannot be solved, then $d$ is halved and sFFT-DT is applied to solve $\bm{\hat{x}}$ again.
When $d$ is halved iteratively until the condition in either Theorem \ref{theorem:probability of sfftdt v1} or Theorem \ref{theorem:probability of sfftdt v2} is satisfied, sFFT-DT guarantees one to stop with the probability indicated in either Theorem \ref{theorem:probability of sfftdt v1} or Theorem \ref{theorem:probability of sfftdt v2}.
This strategy needs the same computational complexity required in sFFT-DT with known $K$ because the cost with $d=N$ is $  O(2a_{m}\frac{N}{d}\log\frac{N}{d}) = O(2a_{m})$ and the total cost is $O(2a_{m})+O(2a_{m}2\log 2)+...+ O(2a_{m}K \log K) < O(4a_{m}K \log K)$.
Thus, sFFT-DT with the strategy of automatically determining $K$ costs double the one with known $K$.

\subsection{Simulation Results for Exactly K-Sparse Signals}\label{Sec: Experimental Results}
Our method\footnote{Our code is now available in http://www.iis.sinica.edu.tw/pages/lcs/publications\_en.html (by searching ``Others'').}, iterative  sFFT-DT, was verified and compared with FFTW (using the plan of FFTW\_ESTIMATE  (http://www.fftw.org/)), sFFT-v3 \cite{Haitham2012} (its code was downloaded from http://spiral.net/software/sfft.html), GFFT (using the plan of GFFT-Fast-Rand, which is an implementation of \cite{Iwen2010} and is discussed in \cite{Segal2012} in detail (its code was downloaded from http://sourceforge.net/projects/gopherfft/)), and Ghazi {\em et al.}'s sFFT \cite{Ghazi2013} for exactly $K$-sparse signals.
The simulations for sFFT-DT, FFTW, GFFT, and Ghazi {\em et al.}'s sFFT were conducted with an Intel CPU Q6600 and $2.99$ GB RAM under Win 7.
sFFT-v3 was run in Linux because the source code was released in Linux's platform.
The signal $\bm{x}$ in time domain was produced as follows:
1) Generate a $K$-sparse signal $\bm{\hat{x}}_{ori}$ and
2) $\bm{x}$ is obtained by inverse FFT of $\bm{\hat{x}_{ori}}$.

For sFFT-DT, the initial $d$ is set according to $  d=\frac{N}{4K}$, based on Theorem \ref{theorem:probability of sfftdt v2}.
For sFFT-v3, $d$ was automatically assigned, according to the source code.
For Ghazi {\em et al.}'s sFFT, $  d=\frac{N}{K} 2^{\lfloor \log\log K \rfloor}$ and $a_{m}=2\log K$, where $2^{\lfloor \log\log K \rfloor}$ is involved to enforce $  \frac{N}{d}$ being an integer.
If $\log\log K$ is an integer, $  d=\frac{N}{K} 2^{\lfloor \log\log K \rfloor}= \frac{N\log K}{K}$.

The comparison of computational time is illustrated in Fig. \ref{fig:Computation Time}.
Fig. \ref{fig:Computation Time}(a) shows the results of computational time versus sparsity under $N=2^{24}$.
For $  K\leq\frac{N}{2^{4}}$, our algorithm outperforms FFTW.
Moreover, sFFT-v3 \cite{Haitham2012}\cite{Haitham2012_1} is only faster than FFTW when $  K\leq\frac{N}{2^{6}}$ and is comparable to Ghazi {\em et al.}'s sFFT.
We can also observe from Fig. \ref{fig:Computation Time}(a) that Ghazi {\em et al.}'s sFFT is slower than iterative sFFT-DT because the P2 cost of syndrome decoding in Ghazi {\em et al.}'s sFFT dominates the computation.
Compared to sFFT-v3 and Ghazi {\em et al.}'s sFFT, our method, iterative sFFT-DT, is able to deal with FFT of signals with large $K$.
GFFT demonstrates the worst results as it crashes when $K>2^{12}$ under $N=2^{24}$.
Fig. \ref{fig:Computation Time}(b) shows the results of computational time versus signal dimension under fixed $K$.
It is observed that the computational time of iterative sFFT and Ghazi {\em et al.}'s sFFT is invariant to $N$, but our method is the fastest.

\begin{figure*}[!t]
\begin{minipage}[b]{.48\linewidth}
\centering{\epsfig{figure=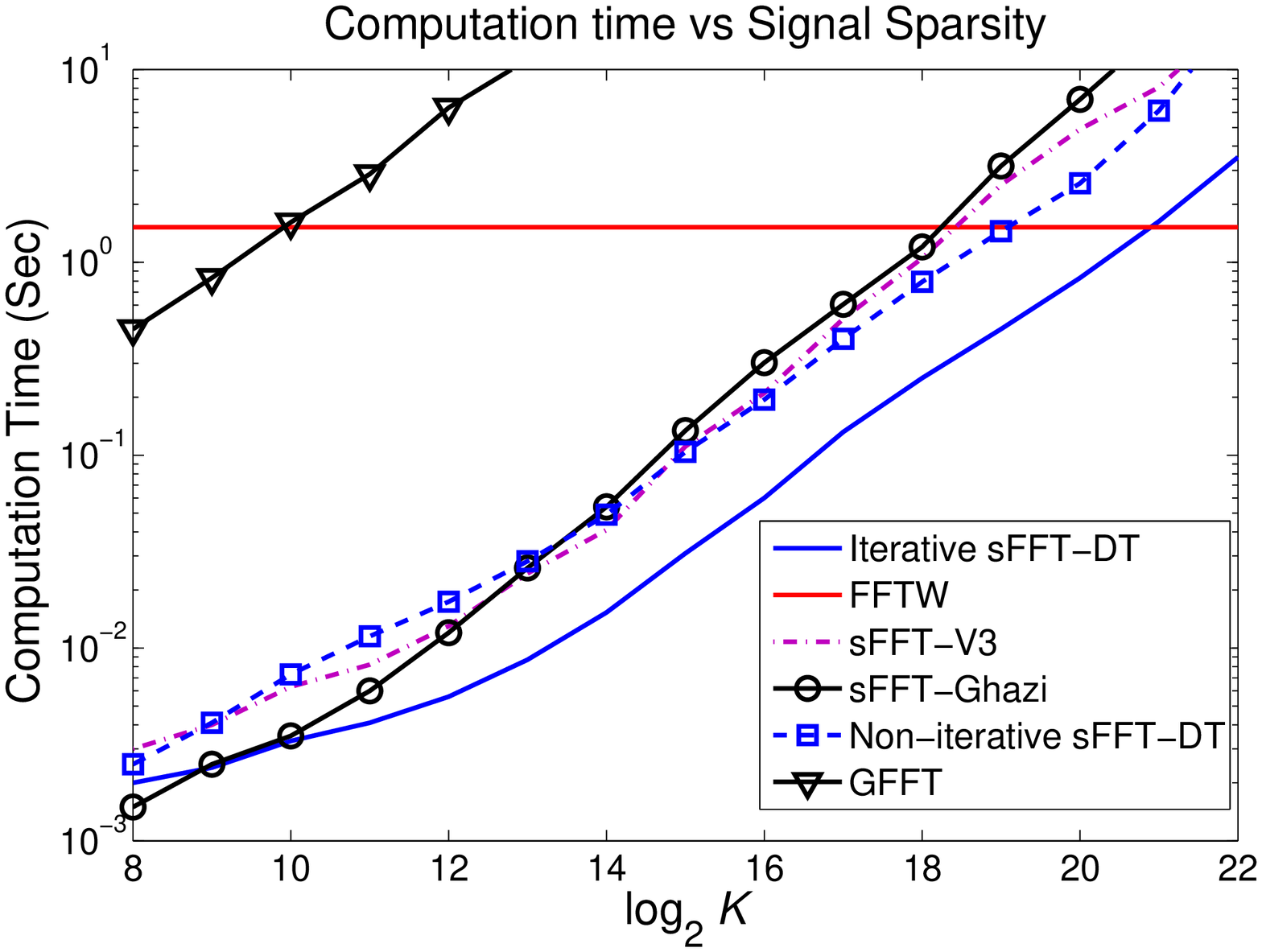,width=3.55in}}
  \centerline{(a)}
\end{minipage}
\begin{minipage}[b]{.48\linewidth}
  \centering{\epsfig{figure=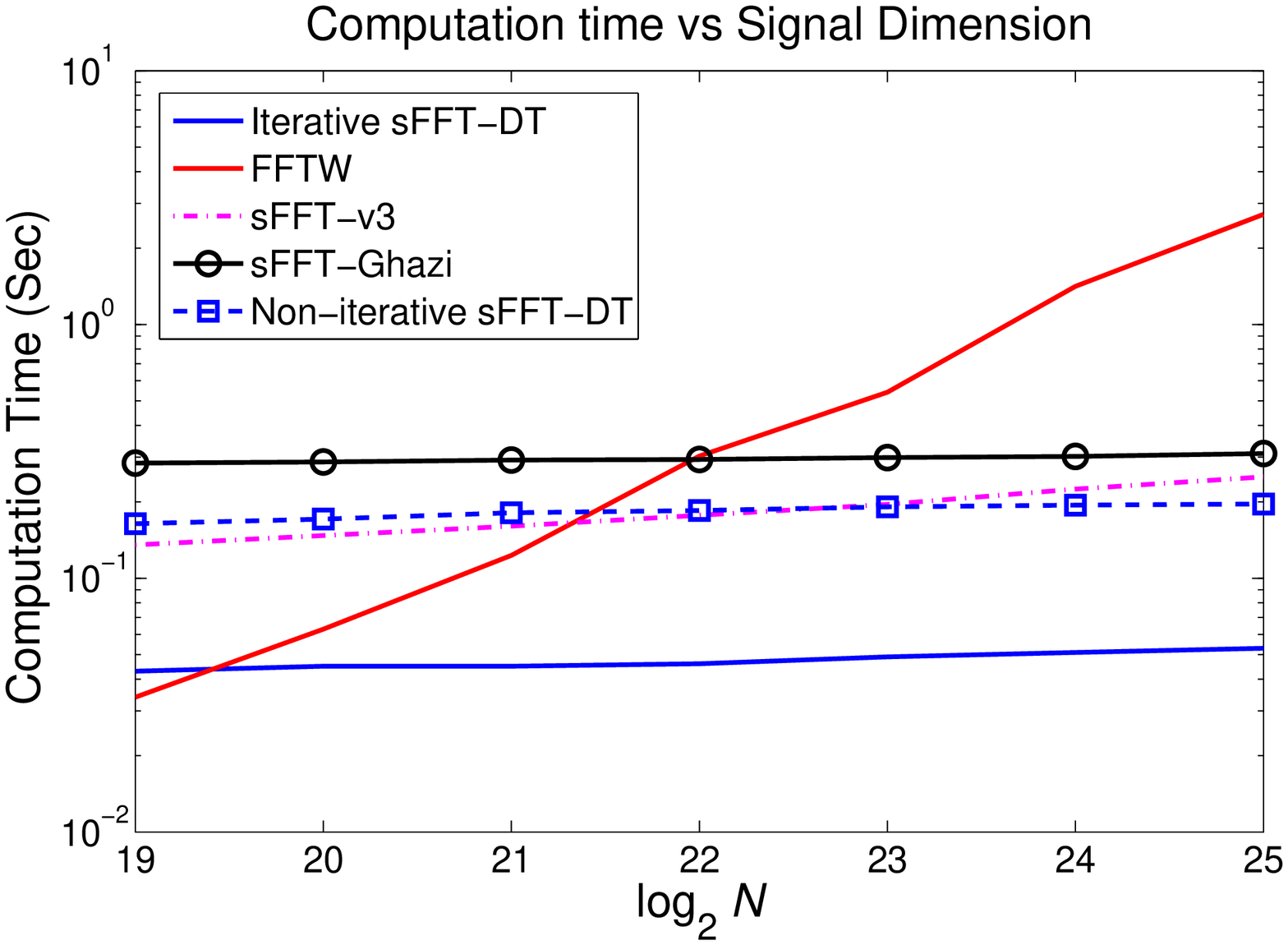,width=3.55in}}
  \centerline{(b)}
\end{minipage}
\hfill
\caption{Comparison of computational time for exact K-sparse signals. (a) Computational time vs. sparsity under $N=2^{24}$. (b) Computational time vs. signal dimension under $K=2^{16}$ and $a_{m}=4$.}
\label{fig:Computation Time}
\end{figure*}

Moreover, according to Theorem \ref{theorem:probability of sfftdt v1}, the performance of non-iterative sFFT-DT seems to be inferior to that of Ghazi {\em et al.}'s sFFT.
Nevertheless, the successful probability described in Theorem \ref{theorem:probability of sfftdt v1} is merely a lower bound.
In our simulations, we compare the recovery performance among three approaches: non-iterative sFFT-DT, iterative sFFT-DT, and Ghazi {\em et al.}'s sFFT \cite{Ghazi2013}.
The parameters for both proposed approaches and Ghazi {\em et al.}'s sFFT were set based on Theorem \ref{theorem:probability of sfftdt v1}.

We have the following observations from Fig. \ref{fig:Performance}, where signal length is $N=2^{20}$.
First, although the theoretical result derived in Theorem \ref{theorem:probability of sfftdt v1} indicates that the performance decreases along with the increase of $K$, it is often better than Ghazi {\em et al.}'s sFFT \cite{Ghazi2013}.
In fact, it is observed that the performance of Ghazi {\em et al.}'s sFFT oscillates.
The oscillation is due to the fact that the floor operation in $2^{\lfloor \log\log K \rfloor}$ (from $  d=\frac{N}{K} 2^{\lfloor \log\log K \rfloor}$) acts like a discontinuous function and leads to large variations of setting $d$.
The recovery performance would benefit by setting small $d$ at the expense of requiring greater computational cost.
Second, iterative sFFT-DT degrades the recovery performance gradually as $K$ increases while, at the same time, the number of collisions ($0 \leq a \leq d$) decreases as well.
That is the reason the performance returns to $100\%$ when $K=2^{16}$ under the case that $  d = \frac{N}{4K}$.

\begin{figure}[t]
\begin{minipage}[b]{.98\linewidth}
\centering{\epsfig{figure=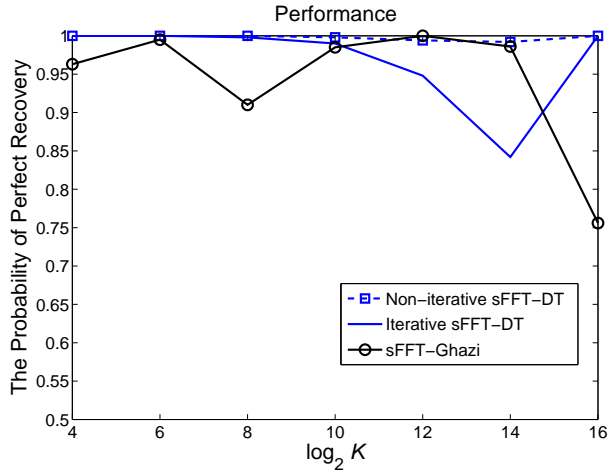,width=3.55in}}
\end{minipage}
\hfill
\caption{Recovery performance comparison among non-iterative sFFT-DT, iterative sFFT-DT, and Ghazi {\em et al.}'s sFFT \cite{Ghazi2013} for exact $K$-sparse signal. The signal length is $N=2^{20}$.}
\label{fig:Performance}
\end{figure}

\section{(Non-Iterative) sFFT-DT for Generally $K$-Sparse Signals}\label{Sec: sFFT-DT: General K-Sparse}
For sparse FFT of a generally $K$-sparse signal $\bm{x}$, the goal is to compute an approximate transform $\bm{\hat{x}}_{out}$ satisfying:
\small
\begin{equation}
\bm{\hat{x}}_{out}= \arg \min_{\bm{\hat{x}}^{'}}  \| \bm{\hat{x}}-\bm{\hat{x}}^{'} \|_{2},
\label{eq:general K-sparse signal}
\end{equation}
\normalsize
where $\bm{\hat{x}}_{out}$ is exactly $K$-sparse and $\bm{\hat{x}}$ is generally $K$-sparse. Without loss of generality, we assume that all frequencies in $\bm{\hat{x}}$ are non-zero.
Similar to exactly $K$-sparse signals, we assume that $K$ significant frequencies (with the first $K$ largest magnitudes) of $\bm{\hat{x}}$ distribute uniformly.

Due to generally $K$-sparsity of $\bm{\hat{x}}$, the right-hand side of Eq. (\ref{eq:xd}) will contain $d$ terms.
When solving syndrome decoding, the remaining insignificant terms will perturb the coefficients of polynomial in Eq. (\ref{eq:poly fun of MPT}).
In addition, how to estimate the roots for perturbed polynomial is an ill-conditioned problem ({\em i.e.}, Wilkinson's polynomial \cite{Wilkinson1959}).
Thus, instead of directly estimating roots by syndrome decoding, we reformulate the aliasing problem in terms of an emerging methodology, called Compressive Sensing (CS) \cite{Donoho2006}\cite{Candes2008}, that has been received much attention recently.

Compressive sensing (CS) is originally proposed for sampling signals under the well-known Nyquist rate.
If the signal follows the assumption that it is sparse in some transformed domain, CS shows that the signal can be recovered from fewer samples, even though the signal is interfered by noises.
The model of CS is formulated as:
\begin{equation}
\bm{y}=\Phi (\bm{s}+\bm{\eta})+\bm{e},
\label{Eq: Random Projection}
\end{equation}
where $\bm{s}$ is a sparse signal, $\Phi$ is a sensing matrix, $\bm{y}$ is the samples (also called measurements), $\bm{\eta}$ is a signal noise, and $\bm{e}$ is a measurement noise.
It should be noted that $\Phi$ must satisfy either the restricted isometry property (RIP) \cite{Candes2005}\cite{Gan2009} or mutual incoherence property (MIP) \cite{Donoho2001}\cite{Welch1974} for successful recovery with high probability.
It has been shown that Gaussian random matrix and partial Fourier matrix \cite{Candes2006-F} are good candidates to be $\Phi$.

For sFFT-DT of a generally $K$-sparse signal, we formulate the aliasing problem as the CS problem shown in Eq. (\ref{Eq: Random Projection}).
The strategy based on CS is motivated by the following facts:
1) The magnitudes of significant terms must be larger than those of insignificant terms.
2) The number of significant terms is less than that of insignificant terms.
Thus, estimating the locations and values of significant terms is
consistent with the basic assumption in the context of CS.

Unlike iterative sFFT-DT for exactly $K$-sparse signals, the iterative approach cannot work for generally $K$-sparse signals since one cannot guarantee that an exact solution can be attained at each iteration without propagating recovery errors for subsequent iterations.
Therefore, we only study non-iterative sFFT-DT for generally $K$-sparse signals.

In this section, how to formulate the aliasing problem as the CS problem is described in Sec. \ref{Sec: Refinement}.
We discuss CS-based performance along with the sufficient conditions for CS successful recovery in Sec. \ref{Sec: Analysis CS}.
In Sec. \ref{Sec: decide a}, we propose a pruning strategy along with proofs to improve the recovery performance and reduce the computational cost.
The detailed algorithm is described in Sec. \ref{Sec: algorithm for general K sparse}.
The computational time analysis and simulations are, respectively, described in Sec. \ref{Sec: Complexity  for General Ksparse Signal} and Sec. \ref{Sec: experimental result of General}.

\subsection{Problem Formulation}\label{Sec: Refinement}
Recall the BCH codes in Eq. (\ref{eq:MPT obejective fun}), where the locations ($z_j^l$) and values ($p_j$) as variables.
Since all candidate locations are known and belong to $S_{k}=\{ k , \  k+\frac{N}{d} \  , ...\  ,  k+(d-1)\frac{N}{d}\}$, instead of considering both $z_j^l$ and $p_j$ as variables in Eq. (\ref{eq:MPT obejective fun}), only $p_{j}$ are thought of as unknown variables here.
Then, we reformulate the aliasing problem in terms of the CS model as:
\small
\begin{equation}
{\underbrace{\begin{bmatrix}
       m_{n_{0}}\\
       m_{n_{1}}\\
       \vdots \\
       m_{n_{r-1}}
\end{bmatrix}}_{\bm{y}}
= \underbrace{\begin{bmatrix}
       F_{k,n_{0}}^{-1}  &  \cdots   &   F_{k+(d-1)\frac{N}{d},n_{0}}^{-1}    \\
       F_{k,n_{1}}^{-1}          & \cdots  &  F_{k+(d-1)\frac{N}{d},n_{1}}^{-1} \\
        \vdots            & \ddots  &\\
       F_{k,n_{r-1}}^{-1}            & \cdots   &F_{k+(d-1)\frac{N}{d},n_{r-1}}^{-1}
\end{bmatrix}
}_{\Phi}
\underbrace{
\begin{bmatrix}
       b_{0}     \\
       b_{1}        \\
        \vdots\\
       b_{d-1}
\end{bmatrix}}_{\bm{s}+\bm{\eta}}},
\label{eq:CS Formulation}
\end{equation}
\normalsize
where $b_{t}$ is the value at ($k+t\frac{N}{d}$)'th frequency for $t \in [0,d-1]$ and $n_{j}$ is the shift factor for $j \in [0,r-1]$.
Let the left-hand side in Eq. (\ref{eq:CS Formulation}) be $\bm{y}$ as in Eq. (\ref{Eq: Random Projection}) and let the right-hand side be $\Phi (\bm{s}+\bm{\eta})$ with $\bm{e}=\bm{0}$.
It should be noted that $\bm{b} = [b_{0},b_{1},...,b_{d-1}]$ is composed of $\bm{s}$ (significant terms with $a$ non-zero frequencies) and $\bm{\eta}$ (insignificant terms with $d-a$ non-zero frequencies).
Therefore, $\bm{y} \in \mathbb{C}^{r}$, $\Phi \in \mathbb{C}^{r \times d}$, and $\bm{s}$ and $\bm{\eta} \in \mathbb{C}^{d}$.
In fact, Eq. (\ref{eq:CS Formulation}) can degenerate to Eq. (\ref{eq:MPT obejective fun}).
For example, Eq. (\ref{eq:MPT obejective fun}) is expressed as $\bm{m}=\bm{Z}\bm{p}$, where $\bm{p}=[p_{0},p_{1},...,p_{a-1} ]^{T}$ and $\bm{Z}$ is a matrix in which $(i,j)$'th entry is $z^{i}_{j}$.
If $\Phi_{p}$ is the matrix by pruning the columns of $\Phi$ corresponding to insignificant terms, then $\bm{Z}=\Phi_{p}$.

To solve $\bm{s}$ given $\bm{y}$, Eq. (\ref{eq:CS Formulation}) have infinite solutions since $r < d$.
Conventionally, two strategies \cite{Donoho2006}, $\ell_{1}$-minimization and greedy approaches, are popularly used for sparse signal recovery in CS.
Among them, Subspace Pursuit (SP) \cite{Dai2009} is one of the greedy algorithms and requires $O(ard)$ for solving Eq. (\ref{eq:CS Formulation}).
SP runs at most $\frac{N}{d}$ times, leading to the total cost of SP being $O(arN)$.
Similar to exactly $K$-sparse signal, we choose $a_{m}$ as the maximum number of collisions for all downsampled frequencies. Thus, the maximum cost of solving SP is $O(a_{m}rN)$.
Since $a_{m}$ can be chosen as a constant to ensure that most of downsampled frequencies satisfy $a \leq a_{m}$ by tuning an appropriate $d$ shown in Theorem \ref{theorem:probability of sfftdt v2}, the cost of SP finally is simplified into $O(rN)$.

The next step is how to set $r$, which is very important and related to computational complexity and recovery performance.
In fact, $r$ is directly related to the sampling rate in CS.
Candes and Wakin \cite{Candes2008} pointed out that $r$  must satisfy $r \geq O(a \log   \frac{d}{a})$ to recover $\bm{s}$ given $\bm{y}$ and $\Phi$.
If $  d=O(\frac{N}{K})$, then $r \geq O(a \log \frac{N}{aK})$.
In other words, $N$ is also a parameter that impacts the size of $r$.
This will make the cost of solving Eq. (\ref{eq:CS Formulation}) related to $N$ and lead to massive computational overhead, which is unacceptable as sFFT-DT must be faster than FFT.
Thus, in sFFT-DT, $r$ is forced to be $3a_{m}$ and the total cost of SP becomes $O(N)$.
In other words, 
we generate at most $r=3a_{m}$ syndromes for solving Eq. (\ref{eq:CS Formulation}).
Since $r$ is fixed, it is expected to degrade performance when $d$ becomes large.
We will discuss the recovery performance in Sec. \ref{Sec: Analysis CS} under this setting.

Finally, we discuss the relationship between shift factors $n_{j}$'s and $\Phi$, in which both affect the recovery performance in CS.
From the theory of CS, the performance also depends on mutual coherence of $\Phi$, which is defined as:
\small
\begin{equation}
\theta = \max_{i,j, i\neq j} \langle \Phi_{i},\Phi_{j}\rangle.
\label{eq:mutual incoherence}
\end{equation}
\normalsize
In this case, the phase difference between $F_{k,n_{j}}^{-1}$ and $F_{k+\frac{N}{d},n_{j}}^{-1}$ is $2\pi\frac{n_{j}}{d}$, as defined in Eq. (\ref{eq:CS Formulation}).
Recall $l \in \{ n_{0},n_{1},...,n_{r-1} \}$ ($0\leq j < r$).
If we set $n_{j}\in [0,2a-1]$, the maximum shift $2a-1$ is encountered and the phase difference between $F_{k,n_{j}}^{-1}$ and $F_{k+\frac{N}{d},n_{j}}^{-1}$ still approaches $0$ with $ 2\pi\frac{2a-1}{d} \rightarrow 0$ ($d \gg a$).
Under this circumstance, $\theta \rightarrow 1$ and perfect sparse recovery will become impossible.
Thus, $n_{j}$'s are uniformly drawn from $[0,d-1]$.
This makes $\Phi$, in fact, be a partial Fourier random matrix and its mutual coherence will be small, as shown in \cite{Candes2006-F}.

In sum, sFFT-DT for generally $K$-sparse signals first performs FFTs of downsampled signals with random shift factors and then for each downsampled frequency, the aliasing problem is reformulated in terms of CS model solved by subspace pursuit.

\subsection{Analysis of CS-based Approach}\label{Sec: Analysis CS}
In this section, we describe the recovery performance and computational cost based on the CS model-based solver, indicated in Eq. (\ref{eq:CS Formulation}).
For subsequent discussions, we let $\bm{\hat{x}}= \bm{\hat{x}_{s}} +\bm{\hat{x}_{ns} }$, where $\bm{\hat{x}_{s}}$ and $\bm{\hat{x}_{ns}} $ represent vectors keeping significant and insignificant terms, respectively.

First, we introduce the definition of Restricted Isometric Property (RIP) for performance analysis as:
\begin{definition}
\label{definition:RIP}
Let $\Phi \in \mathbb{C}^{r \times d}$ and $a_{m} \leq d$. Suppose there exists a restricted isometry constant (RIC) $\delta_{a_{m}}$ of a matrix $\Phi$ such that for each $r \times a$ submatrix $\Phi_{a_{m}}$ of $\Phi$ and for every $\bm{s}$ we have:
 $$ (1-\delta_{a_{m}})\| \bm{s} \|^{2} \leq \| \bm{\Phi_{a_{m}} s} \|^{2} \leq (1+\delta_{a_{m}})\| \bm{s} \|^{2}. $$
The matrix $\Phi$ is said to satisfy the $a_{m}$-restricted isometry property.
\end{definition}
In addition, the performance analysis of SP \cite{Dai2009} is shown in Theorem \ref{theorem:sp performance}.
\begin{theorem}
\label{theorem:sp performance}
Let $\bm{s} \in \mathbb{C}^{d}$ be generally $a_{m}$-sparse and let $\bm{y}=\Phi (\bm{s}+\bm{\eta})+\bm{e}$.
Suppose that the sampling matrix satisfies RIP with parameter $\delta_{6a_{m}}<0.083$.
Then,
\small
\begin{equation}
   \| \bm{s} - \bm{s}_{out} \|_2 \leq \frac{1 + \delta_{6a_{m}}}{\delta_{6a_{m}} (1 - \delta_{6a_{m}})}\left( \| \bm{e} \|_2 +\sqrt{\frac{1+\delta_{6a_{m}}}{a_{m}}}\| \bm{s} - \bm{s}_{a_{m}} \|_1 \right),
\label{Eq: SP error}
\end{equation}
\normalsize
where $\bm{s}_{out}$ is the output of SP and $\bm{s}_{a_{m}}$ is an $a_{m}$-sparse vector minimizing $\| \bm{s} - \bm{s}_{a_{m}} \|_{2}$.
\end{theorem}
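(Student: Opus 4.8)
The plan is to follow the iterative analysis of Subspace Pursuit from Dai and Milenkovic \cite{Dai2009}, adapted so that the insignificant aliasing terms $\bm{\eta}$ are absorbed into the signal tail to be approximated (hence the term $\|\bm{s}-\bm{s}_{a_{m}}\|_1$) while $\bm{e}$ plays the role of genuine measurement noise. Let $T^{*}=\mathrm{supp}(\bm{s}_{a_{m}})$ denote the true $a_{m}$-sized support, let $\hat{T}^{\ell}$ be the support estimate produced at iteration $\ell$, and let $\bm{y}_{r}^{\ell}$ be the corresponding least-squares residual. The target is to show that the recovery error decays geometrically across iterations until it reaches a floor set by $\|\bm{e}\|_2$ and the tail energy, and then to sum the resulting geometric series.

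First I would record the standard consequences of Definition \ref{definition:RIP}: monotonicity of the restricted isometry constants ($\delta_a\leq\delta_b$ for $a\leq b$), the two-sided bound $(1-\delta_{|T|})\|\bm{v}\|_2\leq\|\Phi_T^{*}\Phi_T\bm{v}\|_2\leq(1+\delta_{|T|})\|\bm{v}\|_2$ on any index set $T$, and the near-orthogonality estimate $\|\Phi_I^{*}\Phi_J\bm{v}\|_2\leq\delta_{|I|+|J|}\|\bm{v}\|_2$ for disjoint $I,J$. These let me control least-squares projections onto candidate supports of size up to $6a_{m}$, which is precisely why the hypothesis is stated in terms of $\delta_{6a_{m}}$.

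The core of the argument is a per-iteration contraction lemma. At each step SP (i) augments $\hat{T}^{\ell}$ with the $a_{m}$ columns of $\Phi$ most correlated with the current residual, (ii) solves a least-squares problem on the enlarged set, and (iii) prunes back to the $a_{m}$ largest coefficients to form $\hat{T}^{\ell+1}$. Using the RIP estimates above I would establish an inequality of the form $\|\bm{s}_{T^{*}\setminus\hat{T}^{\ell+1}}\|_2\leq\rho\,\|\bm{s}_{T^{*}\setminus\hat{T}^{\ell}}\|_2+C(\|\bm{e}\|_2+\text{tail})$ with contraction factor $\rho<1$. The main obstacle is precisely verifying that $\rho<1$: this requires carefully tracking how many true-support indices survive \emph{both} the correlation step and the pruning step, and bounding the least-squares residual in terms of the noise and tail using only the RIC. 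Chasing these bounds is where the explicit threshold $\delta_{6a_{m}}<0.083$ emerges---it is the largest RIC for which the combined identification-and-pruning factor stays strictly below one.

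Finally, once geometric contraction with factor $\rho<1$ is in hand, I would iterate the recursion, sum the series $\sum_{\ell}\rho^{\ell}$, and convert the bound on $\|\bm{s}_{T^{*}\setminus\hat{T}}\|_2$ into a bound on $\|\bm{s}-\bm{s}_{out}\|_2$ via one more least-squares/RIP estimate on the terminal support. Collecting the accumulated constants and the noise and tail contributions yields exactly the stated coefficient $\frac{1+\delta_{6a_{m}}}{\delta_{6a_{m}}(1-\delta_{6a_{m}})}$ multiplying $\|\bm{e}\|_2+\sqrt{(1+\delta_{6a_{m}})/a_{m}}\,\|\bm{s}-\bm{s}_{a_{m}}\|_1$, where the $\sqrt{1/a_{m}}$ factor is what converts the $\ell_1$ tail into the $\ell_2$ scale used throughout the RIP estimates.
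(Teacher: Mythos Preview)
The paper does not actually prove this theorem: it is quoted verbatim as the performance guarantee of Subspace Pursuit established in Dai and Milenkovic \cite{Dai2009}, and is then invoked as a black box in the proof of Theorem~\ref{theorem:sFFT performance without no pruning}. Your proposal is a faithful outline of the original Dai--Milenkovic argument (RIP consequences, per-iteration contraction over the identification and pruning steps, geometric summation), so it is correct in spirit and consistent with the cited source, but there is nothing in the present paper to compare it against.
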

It should be noted that, in our case, we formulate the aliasing problem at each downsampled frequency as CS problem solved by SP.
Thus, Theorem \ref{theorem:sp performance} is applied to analyze reconstruction error at each downsampled frequency.
By summing the errors with respect to all downsampled frequencies, we show the total error is still bounded.
\begin{theorem}
\label{theorem:sFFT performance without no pruning}
Let $\bm{\hat{x}} \in \mathbb{C}^{N}$ be generally $K$-sparse.
Given $a_{m}$, $r=O(a_{m})$, $d=\frac{N}{\mu K}$, and $\bm{\hat{x}}_{out}$ is the output of sFFT-DT.
If $\bm{\Phi}$ in Eq. (\ref{eq:CS Formulation}) satisfies RIP with parameter $\delta_{6a_{m}}<0.083$, then we obtain recovery error
$$\| \bm{\hat{x}} - \bm{\hat{x}}_{out} \|_2 \leq C_{\delta_{6a_{m}}}\sqrt{ \| \bm{\hat{x}}_{ns}\|_2^2 + \tau(   \frac{N}{\mu} -Ka_{m})\|  \bm{\hat{x}}  \|_\infty^{2} } ,$$ where
$$C_{\delta_{6a_{m}}} = \frac{1 + \delta_{6a_{m}}}{\delta_{6a_{m}}
(1 -\delta_{6a_{m}})}\sqrt{\frac{d(1+\delta_{6a_{m})}}{a_{m}}},$$
with probability at least $1-\left(\frac{d^{a_{m}}K^{a_{m}}  e^{a_{m}+2}}{\tau N^{a_{m}} (a_{m}+1)^{a_{m}+1}}\right)^{\tau K} $ and computational complexity $\max \left( O(a_{m}\frac{N}{d} \log \frac{N}{d} ), O(a_{m}^2 N) \right)$ by solving the aliasing problem in sFFT-DT with SP.
\end{theorem}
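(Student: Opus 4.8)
The plan is to reduce the global reconstruction error to a sum of per-bin errors, bound each per-bin error with the Subspace Pursuit guarantee of Theorem~\ref{theorem:sp performance}, and then control the bins on which that guarantee is vacuous using the tail-probability estimate of Theorem~\ref{theorem:probability of sfftdt v2}.

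First I would exploit the fact that the index sets $S_{k}=\{k,k+\frac{N}{d},\dots,k+(d-1)\frac{N}{d}\}$, as $k$ ranges over $[0,\frac{N}{d}-1]$, partition $\{0,1,\dots,N-1\}$. Writing $\bm{b}^{(k)}$ for the restriction of $\bm{\hat{x}}$ to $S_{k}$ and $\bm{b}^{(k)}_{out}$ for the SP output of the problem in Eq.~(\ref{eq:CS Formulation}), this gives the exact decomposition
\[
\| \bm{\hat{x}}-\bm{\hat{x}}_{out} \|_{2}^{2}=\sum_{k=0}^{\frac{N}{d}-1}\| \bm{b}^{(k)}-\bm{b}^{(k)}_{out} \|_{2}^{2},
\]
so it suffices to bound each summand. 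Since $\bm{e}=\bm{0}$ in Eq.~(\ref{eq:CS Formulation}), Theorem~\ref{theorem:sp performance} applied to bin $k$ yields $\| \bm{b}^{(k)}-\bm{b}^{(k)}_{out}\|_{2}\leq \frac{1+\delta_{6a_{m}}}{\delta_{6a_{m}}(1-\delta_{6a_{m}})}\sqrt{\frac{1+\delta_{6a_{m}}}{a_{m}}}\,\| \bm{b}^{(k)}-\bm{b}^{(k)}_{a_{m}}\|_{1}$, where $\bm{b}^{(k)}_{a_{m}}$ is the best $a_{m}$-term approximation of $\bm{b}^{(k)}$. Because this tail has at most $d$ nonzero entries, I would apply $\|\cdot\|_{1}\leq\sqrt{d}\,\|\cdot\|_{2}$ to absorb the factor $\sqrt{d}$ into the constant and obtain $\| \bm{b}^{(k)}-\bm{b}^{(k)}_{out}\|_{2}\leq C_{\delta_{6a_{m}}}\| \bm{b}^{(k)}-\bm{b}^{(k)}_{a_{m}}\|_{2}$, which is exactly the claimed $C_{\delta_{6a_{m}}}$.

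Next I would split the bins according to the number $a$ of significant frequencies they contain. For a \emph{good} bin ($a\leq a_{m}$) every significant entry survives in $\bm{b}^{(k)}_{a_{m}}$, so the tail is purely insignificant and $\| \bm{b}^{(k)}-\bm{b}^{(k)}_{a_{m}}\|_{2}^{2}\leq \| \bm{b}^{(k)}_{ns}\|_{2}^{2}$; summing over good bins gives at most $\| \bm{\hat{x}}_{ns}\|_{2}^{2}$. For a \emph{bad} bin ($a>a_{m}$) SP carries no useful guarantee, so I would bound its tail crudely by its $d-a_{m}$ entries, each of magnitude at most $\| \bm{\hat{x}}\|_{\infty}$, giving $\| \bm{b}^{(k)}-\bm{b}^{(k)}_{a_{m}}\|_{2}^{2}\leq (d-a_{m})\| \bm{\hat{x}}\|_{\infty}^{2}$. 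Theorem~\ref{theorem:probability of sfftdt v2} guarantees that, with probability at least $1-\left(\frac{d^{a_{m}}K^{a_{m}}e^{a_{m}+2}}{\tau N^{a_{m}}(a_{m}+1)^{a_{m}+1}}\right)^{\tau K}$, fewer than $\tau K$ bad bins occur; using $Kd=\frac{N}{\mu}$, their total contribution is at most $\tau K(d-a_{m})\| \bm{\hat{x}}\|_{\infty}^{2}=\tau(\frac{N}{\mu}-Ka_{m})\| \bm{\hat{x}}\|_{\infty}^{2}$. Combining the two cases, factoring out $C_{\delta_{6a_{m}}}^{2}$, and taking a square root produces the stated error bound. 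The complexity then follows by accounting for the $r=O(a_{m})$ downsampled FFTs, costing $O(a_{m}\frac{N}{d}\log\frac{N}{d})$, and the $\frac{N}{d}$ SP solves at cost $O(a_{m}rd)=O(a_{m}^{2}d)$ each, i.e. $O(a_{m}^{2}N)$ in total, whence the maximum.

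The hard part will be the bad-bin accounting: once $a>a_{m}$ the RIP hypothesis $\delta_{6a_{m}}<0.083$ no longer constrains SP on the target, so the per-bin error there must be controlled purely by the worst-case tail energy $(d-a_{m})\| \bm{\hat{x}}\|_{\infty}^{2}$ together with the probabilistic cap on the number of such bins supplied by Theorem~\ref{theorem:probability of sfftdt v2}. Care is also needed to ensure the $\ell_{1}\!\to\!\ell_{2}$ conversion factor $\sqrt{d}$ lands precisely inside $C_{\delta_{6a_{m}}}$, and that the good/bad split does not corrupt the bound through the mild over-count of insignificant energy in bad bins, which only loosens the estimate harmlessly.
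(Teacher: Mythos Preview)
Your proposal is correct and follows essentially the same argument as the paper: decompose the global $\ell_2$ error across the $\frac{N}{d}$ bins, apply the Subspace Pursuit bound of Theorem~\ref{theorem:sp performance} with $\bm{e}=\bm{0}$ per bin, convert $\ell_1\to\ell_2$ via the factor $\sqrt{d}$ to obtain $C_{\delta_{6a_m}}$, split into bins with $a\le a_m$ (whose tails are purely insignificant and sum to at most $\|\bm{\hat{x}}_{ns}\|_2^2$) and bins with $a>a_m$ (each contributing at most $(d-a_m)\|\bm{\hat{x}}\|_\infty^2$), and invoke Theorem~\ref{theorem:probability of sfftdt v2} to cap the number of bad bins by $\tau K$. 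One small wording point: even for bad bins the SP guarantee \emph{does} apply (RIP is a property of $\bm{\Phi}$, not of the signal), it is just that the residual $\|\bm{b}^{(k)}-\bm{b}^{(k)}_{a_m}\|$ now contains significant entries; your crude bound $(d-a_m)\|\bm{\hat{x}}\|_\infty^2$ is exactly how the paper handles this.
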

\begin{proof}
First, we relax the term $\| \bm{s} - \bm{s}_{a_{m}} \|_1$ in Eq. (\ref{Eq: SP error}) as $\sqrt{d}\| \bm{s} - \bm{s}_{a_{m}} \|_2$.
Let  $\bm{s}^{i}$, $\bm{\eta}^{i}$, and $a^{i}$ represent the significant terms, insignificant terms, and number of collisions at $i$'th downsampled frequency, respectively.
Thus, $\bm{s}^{i}- \bm{s}^{i}_{out}$ is the reconstruction error at $i$'th downsampled frequency.
It should be noted that, in our case, $\bm{e}=\bm{0}$
for all downsampled frequencies.
The total error is derived as:
\begin{equation}
\begin{aligned}
\| \bm{\hat{x}} - \bm{\hat{x}}_{out} \|_2^{2}  = \sum_{i=0}^{d-1} \|  \bm{s}^{i}- \bm{s}^{i}_{out}  \|_2^2 \leq  \left(  \frac{1 + \delta_{6a_{m}}}{\delta_{6a_{m}} (1 - \delta_{6a_{m}})}\sqrt{\frac{d(1+\delta_{6a_{m})}}{a_{m}}} \right)^{2}  \sum_{i=0}^{d-1} \|  \bm{s}^{i}- \bm{s}^{i}_{a_{m}}  \|_2^2.
\end{aligned}
\label{eq:derivation of sFFT error}
\end{equation}
For each downsampled frequency labeled $i$, we have (1)  if $a^{i} \leq a_{m}$, then $\| \bm{s}^{i} - \bm{s}_{a_{m}}^{i} \|_2 \leq \| \bm{\eta}^{i} \|_2$ and
(2) if $a^{i} > a_{m}$, then $\bm{s}^{i} - \bm{s}_{a_{m}}^{i} = \bm{s}^{i}+\bm{\eta}^{i}-\mathcal{T}_{a_{m}}(\bm{s}^{i})$,
where $\mathcal{T}_{a_{m}}(\cdot)$ is a soft-thresholding operator keeping the first $a_{m}$ largest entries in magnitude and setting the others to zero.
Based on the above conditions, we can derive
\begin{equation}
\begin{aligned}
  &\sum_{i=0}^{d-1} \|  \bm{s}^{i}- \bm{s}^{i}_{a_{m}}  \|_2^2\\
    &= \sum_{ \{i | a^{i} \leq a_m\} } \| \bm{\eta}^{i} \|_{2}^{2}+   \sum_{ \{i | a^{i} > a_m\} }  \| \bm{s}^{i}+\bm{\eta}^{i}-\mathcal{T}_{a_{m}}(\bm{s}^{i}) \|_{2}^{2} \\
   &\leq \sum_{ \{i | a^{i} \leq a_m\} } \| \bm{\eta}^{i} \|_{2}^{2} + \sum_{ \{i | a^{i} > a_m\} } \| \bm{\eta}^{i} \|_{2}^{2} + \| \bm{s}^{i}-\mathcal{T}_{a_{m}}(\bm{s}^{i}) \|_{2}^{2}\\
   &= \sum_{i=0}^{d-1} \| \bm{\eta}^{i} \|_{2}^{2} +  \sum_{ \{i | a^{i} > a_m\} } \| \bm{s}^{i}-\mathcal{T}_{a_{m}}(\bm{s}^{i}) \|_{2}^{2} \\
   &\leq \| \bm{\hat{x}}_{ns} \|_{2}^{2} + \tau(\frac{N}{\mu}-Ka_{m})\| \bm{\hat{x}} \|_{\infty}^{2}.
\end{aligned}
\label{Eq: Derive the error based on am}
\end{equation}
The last inequality is due to the fact that $\bm{\eta}^{i}$ is the vector keeping the insignificant terms such that $ \sum_{i=0}^{d-1} \|  \bm{\eta}^{i} \|_2^2 = \| \bm{\hat{x}}_{ns}\|$.
In addition, for $\bm{s}^{i}$ with $a^{i} > a_m$, it contains at most $d$ significant frequencies in the worst case.
Thus, $\bm{s}^{i}-\mathcal{T}_{a_{m}}(\bm{s}^{i})$ leaves $d-a_{m}$ significant frequencies and
$\| \bm{s}^{i}-\mathcal{T}_{a_{m}}(\bm{s}^{i}) \|_2^2 \leq  (d-a_{m})\| \bm{s}^{i} \|_{\infty}^2 \leq (d-a_{m})\| \bm{\hat{x}} \|_{\infty}^2 $ always holds.

On the other hand, when Theorem \ref{theorem:probability of sfftdt v2} holds, it implies at most $\tau K -1$ downsampled frequencies with number of collisions larger than $a_m$.
Then, the cardinality of $\{i | a^{i} > a_m\}$ is $\tau K -1$ and $\sum_{ \{i | a^{i} > a_m\} } \| \mathcal{T}_{a_{m}}(\bm{s}) \|_2^2 \leq   (\tau K -1)(d-a_{m})\| \bm{\hat{x}} \|_\infty^2 \leq   \tau K(d-a_{m})\| \bm{\hat{x}} \|_\infty^2 \leq   \tau(\frac{N}{\mu}-Ka_{m})\| \bm{\hat{x}} \|_\infty^2$.

Finally, the computational cost consists of the costs of generating the required syndromes and running SP.
In similar to exactly-$K$ sparse case, given $a_{m}$ and $d=O(\frac{N}{K})$, FFTs for downsampled signals totally cost $O(a_{m} \frac{N}{d} \log \frac{N}{d})$ and the computational cost of SP discussed in Sec. \ref{Sec: Refinement} is $O(a_{m}^2 N)$.
Thus, we complete this proof.
\end{proof}

It is important to check if the sufficient condition $\delta_{6a}<0.083$ holds.
How to compute RIC for a matrix, however, is a NP-hard problem.
But we can know that RIC is actually related to $a_{m}$, $r$, and $d$.
When fixing $r = 3a_{m}$, $d$ must satisfy $O(a_{m})$ such that $\delta_{6a}<0.083$ holds with high probability, implying that sFFT-DT works well under $K=\Theta(N)$.
The term, $\tau(   \frac{N}{\mu} -Ka_{m})\|  \bm{\hat{x}}  \|_\infty^{2}$, in Theorem \ref{theorem:sFFT performance without no pruning} also reveals that $K=\Theta(N)$ results in better performance.
To further improve this result in Theorem \ref{theorem:sFFT performance without no pruning}, we propose a pruning strategy to prune $\Phi \in \mathbb{C}^{r \times d}$ into $\Phi_{p} \in \mathbb{C}^{r \times O(a)}$, which benefits the recovery performance and computational cost.
Specifically, we are able to reduce recovery errors and easily achieve the sufficient condition if $d$ is replaced by $O(a)$ and reduce the computational cost of SP to become $O(K)$.

\subsection{Pruning Strategy}\label{Sec: decide a}
One can observe from Eq. (\ref{eq:CS Formulation}) that the $j$'th column of $\Phi$ corresponds to the location $k+(j-1)\frac{N}{d}$ and the root $e^{\frac{-i2\pi (k+(j-1)\frac{N}{d})l}{N}}$.
The basic idea of pruning is to prune as many locations/roots corresponding to insignificant terms as possible.
Here, the pruning strategy contains three steps:
\begin{itemize}
  \item[1.] Estimate the number of collisions, $a$, for each downsampled frequency by singular value decomposition (SVD) for the matrix $\bm{M}$ presented in Eq. (\ref{eq:auxiliary fun of MPT}).
  \item[2.] Form the polynomial presented in Step (ii) of syndrome decoding, and substitute all roots in $U_{k}$ into the polynomial and reserve the locations with the first $O(a)$ smallest errors in $U_{k}$.
  \item[3.] According to reserved locations, prune $\Phi$ to yield $\Phi_{p}$.
\end{itemize}

\subsubsection{Step 1 of Pruning}
In Step 1 of the pruning strategy, we estimate  $a$'s for all downsampled frequencies. By doing SVD for the matrix in Eq. (\ref{eq:auxiliary fun of MPT}), there are $a_{m}$ singular values for each downsampled frequency. Collect all $\frac{N}{d}a_{m}$ singular values from all downsampled frequencies and index each singular value according to which downsampled frequency it is from.
The first $K$ largest singular values will vote which downsampled frequency includes the significant term.

Now, we show why the strategy is effective.
We redefine the problem in Eq. (\ref{eq:auxiliary fun of MPT}) for generally $K$-sparse signals.
Let $S_{s}$ be a set containing all indices of significant terms and let $S_{ns}$ be the one defined for insignificant terms, where $S_{s} \cap S_{ns} = \{ \}$, $S_{s} \cup  S_{ns} = \{ 0,1, \ ... \ N -1 \}$, $|S_{s}| = a$ and $|S_{ns}| = d-a$.
Thus, the syndrome in Eq. (\ref{eq:MPT obejective fun}) can be rewritten as $ m_{i} = p_{0}z_{0}^{i}+p_{1}z_{1}^{i}+...+p_{N-1}z_{N-1}^{i} = m_{s,i}+m_{ns,i}$, where
$  m_{s,i} = \sum_{j \in S_{s} } p_{j}z_{j}^{i}$ and $ m_{ns,i} = \sum_{k \in S_{ns} } p_{k}z_{k}^{i}$.
Now, the matrix $\bm{M}$ in Eq. (\ref{eq:auxiliary fun of MPT}) can be rewritten as:
\small
\begin{equation}
\begin{aligned}
\bm{M}=
\begin{bmatrix}
       m_{0}  &  \cdots   &   m_{a_{m}-1}    \\
       m_{1}          & \cdots  &  m_{a_{m}} \\
        \vdots            & \ddots  &\\
       m_{a_{m}-1}            & \cdots   & m_{2a_{m}-2}
\end{bmatrix}=\bm{M}_{s}+\bm{M}_{ns}=
\begin{bmatrix}
       m_{s,0}+m_{ns,0}  &  \cdots   &   m_{s,a_{m}-1}+m_{ns,a_{m}-1}    \\
       m_{s,1}+m_{ns,1}          & \cdots  &  m_{s,a_{m}}+m_{ns,a_{m}} \\
        \vdots            & \ddots  &\\
       m_{s,a_{m}-1}+m_{ns,a_{m}-1}            & \cdots   & m_{s,2a_{m}-2}+m_{ns,2a_{m}-2}
\end{bmatrix},
\end{aligned}
\label{eq:formulation of M for general }
\end{equation}
\normalsize
where $\bm{M}_{ns}$ acts like a ``noise'' matrix produced by insignificant components and $\bm{M}_{s}$ comes from significant terms.
$\bm{M}$ can also be expressed as:
\small
\begin{equation}
\begin{aligned}
\bm{M}=\sum_{i \in S_{s}\cup S_{ns}}p_{i}\bm{z}_{i}\bm{z}_{i}^{T}= \bm{M}_{s}+\bm{M}_{ns}=\sum_{j \in S_s} p_{j}\bm{z}_{j}\bm{z}_{j}^{T}
+\sum_{k \in S_{ns}} p_{k}\bm{z}_{k}\bm{z}_{k}^{T},
\label{eq:formulation of another M for general}
\end{aligned}
\end{equation}
\normalsize
where $  \bm{z}_{i} = [ z_{i}^{0} \ z_{i}^{1} \ ... \ z_{i}^{a_{m}-1}   ]^{T}$  is a column vector, as defined in Eq. (\ref{eq:MPT obejective fun}).

It is worth noting that Eq. (\ref{eq:formulation of another M for general}) is similar to SVD.
Nevertheless, there are some differences between them:
(1) $p_{i}$'s are complex but not real and $\|\bm{z}_{i}\|_{2}$'s are not normalized;
(2) $\bm{z}_{i}$'s are not orthogonal vectors; and
(3) The actual SVD of $\bm{M}$ is $\bm{M}=\sum_{i \in S_{s}\cup S_{ns}}p_{i}\bm{z}_{i}\bm{z}_{i}^{*}$, where $*$ denotes a conjugate transpose.

To alleviate the difference (1), Eq. (\ref{eq:formulation of another M for general}) is rewritten as:
\small
\begin{equation}
\bm{M}=\sum_{j \in S_s} a_{m}|\hat{p}_{j}|\hat{\bm{z}}_{j}\hat{\bm{z}}_{j}^{T}+\sum_{k \in S_{ns}} a_{m}|\hat{p}_{k}|\hat{\bm{z}}_{k}\hat{\bm{z}}_{k}^{T},
\label{eq:transformed formulation of another M for general}
\end{equation}
\normalsize
where $  \hat{\bm{z}}_{i} = \frac{1}{\sqrt{a_{m}}}e^{\sqrt{-1}\theta_{i}}[ z_{i}^{0} \ z_{i}^{1} \ ... \ z_{i}^{a_{m}-1}   ]^{T}$ and $\hat{p}_{i}= |p_{i}|e^{-\sqrt{-1}\theta_{i}}$ for $i \in S_{s}\cup S_{ns}$.
Thus, we have $\| \hat{\bm{z}}_{i} \|_{2} =1$.
As for the difference (3), it can be solved by symmetric SVD (SSVD) \cite{Angelika1988} instead of SVD.
However, the singular values of SSVD have been proven to be the same as those of SVD for the same matrix.
Thus, Eq. (\ref{eq:transformed formulation of another M for general}) can directly use SVD for matrix $\bm{M}$ to obtain singular values.

On the other hand, the difference (2) is inevitable since $\hat{\bm{z}}_{i}$'s are not orthogonal, leading to the fact that the singular values of $\bm{M}$ are not directly equal to the magnitudes of frequencies $|p_{i}|$'s.
However, they are actually related.
In \cite{Takos2008}, Takos and Hadjicostis actually explore the relationship between the eigenvalues of $\bm{M}$ and signal values.
But their proofs are based on real BCH codes, implying $\bm{M}$ is a Hermitian matrix.
It is not appropriate for our case because $\bm{M}$ is, in fact, a complex symmetric matrix.
Thus, we develop another theorem illustrating the relationship between singular value and signal values.
First, Lemma \ref{lemma:singular value of sum of two matrices} in \cite{Hogben2007} illustrates the singular values of sum of matrices.
\begin{lemma}
\label{lemma:singular value of sum of two matrices}
For any matrices $\bm{A} \in \mathbb{C}^{n \times n}$ and $\bm{B}\in \mathbb{C}^{n \times n}$, let $\bm{C}=\bm{A}+\bm{B}$ and let $\sigma_{j}(\cdot)$ be a function returning the $j$'th largest singular value with $\sigma_{1}(\bm{C})\geq \sigma_{2}(\bm{C}) \geq ... \geq \sigma_{n}(\bm{C})$.
Then,
\begin{equation}
\begin{aligned}
\sigma_{j}(\bm{A})-\sigma_{1}(\bm{B}) \leq \sigma_{j}(\bm{C})
 \leq \sigma_{j}(\bm{A})+\sigma_{1}(\bm{B}),
\label{eq:singular value formula}
\end{aligned}
\end{equation}
holds for $1\leq j\leq n$.
\end{lemma}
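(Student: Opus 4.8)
The plan is to prove both inequalities from the Courant--Fischer min--max variational characterization of singular values together with the triangle inequality; this keeps the argument elementary and symmetric. Concretely, I would start from the identity
\begin{equation}
\sigma_{j}(\bm{C})=\min_{\dim V=n-j+1}\ \max_{\bm{x}\in V,\ \|\bm{x}\|=1}\|\bm{C}\bm{x}\|,
\end{equation}
valid for every $1\le j\le n$, where $V$ ranges over subspaces of $\mathbb{C}^{n}$ of the indicated dimension. The only external fact I rely on is that $\sigma_{1}(\bm{B})$ is the spectral (operator) norm of $\bm{B}$, so that $\|\bm{B}\bm{x}\|\le\sigma_{1}(\bm{B})\|\bm{x}\|$ holds for all $\bm{x}$. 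An alternative, equally short route would be to form the Hermitian dilation of $\bm{C}$ (whose eigenvalues are $\pm\sigma_{j}(\bm{C})$) and quote Weyl's perturbation inequality for Hermitian matrices; I would nonetheless favor the variational argument since it is self-contained and needs no eigenvalue machinery.

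For the upper bound, I would fix $j$ and let $V^{\ast}$ be a subspace with $\dim V^{\ast}=n-j+1$ that attains the minimum for $\bm{A}$, i.e.\ $\max_{\bm{x}\in V^{\ast},\,\|\bm{x}\|=1}\|\bm{A}\bm{x}\|=\sigma_{j}(\bm{A})$. Because $\sigma_{j}(\bm{C})$ is itself a minimum over all subspaces of this dimension, plugging $V^{\ast}$ in can only increase the value, giving $\sigma_{j}(\bm{C})\le\max_{\bm{x}\in V^{\ast},\,\|\bm{x}\|=1}\|\bm{C}\bm{x}\|$. Applying the triangle inequality inside that maximum, $\|\bm{C}\bm{x}\|\le\|\bm{A}\bm{x}\|+\|\bm{B}\bm{x}\|\le\sigma_{j}(\bm{A})+\sigma_{1}(\bm{B})$ for every unit $\bm{x}\in V^{\ast}$, which yields $\sigma_{j}(\bm{C})\le\sigma_{j}(\bm{A})+\sigma_{1}(\bm{B})$.

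For the lower bound I would simply reuse the upper bound by symmetry: writing $\bm{A}=\bm{C}+(-\bm{B})$ and noting $\sigma_{1}(-\bm{B})=\sigma_{1}(\bm{B})$, the inequality just proved gives $\sigma_{j}(\bm{A})\le\sigma_{j}(\bm{C})+\sigma_{1}(\bm{B})$, which rearranges to $\sigma_{j}(\bm{C})\ge\sigma_{j}(\bm{A})-\sigma_{1}(\bm{B})$; combining the two halves completes the proof. I expect the only delicate point to be the bookkeeping in the variational step---selecting the extremal subspace for $\bm{A}$ and verifying that its dimension $n-j+1$ matches the constraint defining $\sigma_{j}(\bm{C})$, so that the substitution is legitimate. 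Everything after that is a one-line triangle-inequality estimate, so no genuinely hard calculation is anticipated.
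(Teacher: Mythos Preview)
Your argument is correct: the Courant--Fischer min--max characterization plus the triangle inequality gives the upper bound, and the symmetry $\bm{A}=\bm{C}+(-\bm{B})$ yields the lower bound. The bookkeeping on the subspace dimension is fine.

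The paper, however, does not prove this lemma at all; it simply quotes it as a known fact from \cite{Hogben2007}. So your proposal differs from the paper in that you supply a self-contained variational proof where the paper just cites the literature. What your route buys is independence from an external reference and transparency about why the perturbation bound holds; what the citation buys is brevity, since this is a standard Weyl-type inequality for singular values. Either is acceptable here, and your proof would make the paper more self-contained without adding any real length.
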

Second, for both matrices $\bm{M}_s$ and $\bm{M}_{ns}$ given in Eq. (\ref{eq:formulation of another M for general}), we explore the upper bound of singular values of $\bm{M}_{ns}$ and the lower bound of singular values of $\bm{M}_{s}$, where both bounds are used as the sufficient condition of correctly determining the number $a$ of collisions in aliasing.
Specifically, Lemma \ref{lemma:singular value of sum of two matrices} is used to derive the upper bound of singular values of $\bm{M}_{ns}$ whatever $a$ is.
But the lower bound of singular values of $\bm{M}_{s}$ is non-trivial only when $a=1$ because it becomes 0 for $a>1$.
\begin{lemma}
\label{lemma:singular value and M}
For any $a$, the singular values of $\bm{M}_{ns}$ satisfy $$ \sigma_{1}(\bm{M}_{ns}) \leq a_{m}dq_{max},$$
where $q_{max} = \max_{k \in S_{ns}} |\bm{\hat{x}}[k]|$.
In addition, for $a=1$, the singular values of $\bm{M}_{s}$ satisfy
$$ \sigma_{1}(\bm{M_{s}}) \geq a_{m}\epsilon_{min},$$
where $\epsilon_{min} = \min_{j \in S_{s}} |\bm{\hat{x}}[j]|$.
\end{lemma}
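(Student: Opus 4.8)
The plan is to prove the two inequalities separately, each reducing to the subadditivity of the spectral norm (note $\sigma_{1}(\cdot)=\|\cdot\|_{2}$) together with the exact spectral norm of a single outer product $\bm{z}_{i}\bm{z}_{i}^{T}$. The one nontrivial ingredient, which I would establish first and then reuse, is that $\sigma_{1}(\bm{z}_{i}\bm{z}_{i}^{T})=\|\bm{z}_{i}\|_{2}^{2}=a_{m}$. The equality $\|\bm{z}_{i}\|_{2}^{2}=a_{m}$ is immediate, since every root $z_{i}=F_{s_{i},1}^{-1}=e^{i2\pi s_{i}/N}$ lies on the unit circle, so each entry $z_{i}^{l}$ of $\bm{z}_{i}=[z_{i}^{0},\ldots,z_{i}^{a_{m}-1}]^{T}$ has modulus one. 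For the spectral norm I cannot simply appeal to the Hermitian case, because $\bm{z}_{i}\bm{z}_{i}^{T}$ uses the transpose (not the conjugate transpose) and is therefore complex symmetric rather than Hermitian; this is exactly the obstruction the paper flags in its remarks (1)--(3). I would instead compute $(\bm{z}_{i}\bm{z}_{i}^{T})^{*}(\bm{z}_{i}\bm{z}_{i}^{T})=\bar{\bm{z}}_{i}(\bar{\bm{z}}_{i}^{T}\bm{z}_{i})\bm{z}_{i}^{T}=\|\bm{z}_{i}\|_{2}^{2}\,\bar{\bm{z}}_{i}\bm{z}_{i}^{T}$, a rank-one matrix whose unique nonzero eigenvalue equals $\|\bm{z}_{i}\|_{2}^{2}(\bm{z}_{i}^{T}\bar{\bm{z}}_{i})=\|\bm{z}_{i}\|_{2}^{4}=a_{m}^{2}$; taking the square root gives $\sigma_{1}(\bm{z}_{i}\bm{z}_{i}^{T})=a_{m}$.

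For the upper bound on $\sigma_{1}(\bm{M}_{ns})$, I would apply the triangle inequality for the spectral norm to the representation $\bm{M}_{ns}=\sum_{k\in S_{ns}}p_{k}\bm{z}_{k}\bm{z}_{k}^{T}$ taken from Eq.~(\ref{eq:formulation of another M for general}), obtaining $\sigma_{1}(\bm{M}_{ns})\leq\sum_{k\in S_{ns}}|p_{k}|\,\sigma_{1}(\bm{z}_{k}\bm{z}_{k}^{T})=a_{m}\sum_{k\in S_{ns}}|p_{k}|$. Since $p_{k}$ is the frequency value $\hat{x}[\cdot]$ at an insignificant location (Eq.~(\ref{eq:MPT obejective fun})), each $|p_{k}|\leq q_{max}$, and there are $|S_{ns}|=d-a\leq d$ such terms, so $\sum_{k\in S_{ns}}|p_{k}|\leq d\,q_{max}$. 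Combining yields $\sigma_{1}(\bm{M}_{ns})\leq a_{m}d\,q_{max}$, which holds for every $a$.

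For the lower bound I would specialize to $a=1$. Then $S_{s}$ is a singleton $\{j\}$, so $\bm{M}_{s}=p_{j}\bm{z}_{j}\bm{z}_{j}^{T}$ is exactly rank one, and the outer-product computation above gives $\sigma_{1}(\bm{M}_{s})=|p_{j}|\,a_{m}$. Because $S_{s}=\{j\}$ we have $\epsilon_{min}=\min_{j\in S_{s}}|\hat{x}[j]|=|p_{j}|$, whence $\sigma_{1}(\bm{M}_{s})=a_{m}\epsilon_{min}\geq a_{m}\epsilon_{min}$, as claimed.

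The only genuinely delicate step is the spectral-norm evaluation of the complex-symmetric outer product in the first paragraph; everything afterwards is routine norm subadditivity and counting of insignificant terms, so I would present that computation carefully and invoke it verbatim in both bounds. As the paper already remarks, trying to extend a useful lower bound to $a>1$ is hopeless because the relevant singular value degenerates once $\bm{M}_{s}$ has rank exceeding one, which is why restricting the second claim to $a=1$ is essential rather than a matter of convenience.
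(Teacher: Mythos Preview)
Your proposal is correct and follows essentially the same route as the paper: both arguments express $\bm{M}_{ns}$ and $\bm{M}_{s}$ as sums of rank-one outer products, apply subadditivity of the spectral norm (the paper via Lemma~\ref{lemma:singular value of sum of two matrices}), and reduce to the singular value of a single outer product. The only cosmetic difference is that the paper first normalizes to $\hat{\bm{z}}_i$ so that $\sigma_1(\hat{\bm{z}}_i\hat{\bm{z}}_i^{T})=1$ is implicit, whereas you work with the unnormalized $\bm{z}_i$ and explicitly compute $\sigma_1(\bm{z}_i\bm{z}_i^{T})=a_m$ for the complex-symmetric case; your explicit verification of this step is, if anything, more careful than the paper's.
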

\begin{proof}
Since $\bm{M}_{ns}=\sum_{k \in S_{ns}  }  a_{m}|p_{k}|\hat{\bm{z}}_{k}\hat{\bm{z}}_{k}^{T}$, we have
\begin{equation}
\begin{aligned}
&\sigma_{1}(\bm{M}_{ns}) = \sigma_{1}( \sum_{k \in S_{ns}  }  a_{m}|p_{k}|\hat{\bm{z}}_{k}\hat{\bm{z}}_{k}^{T})\\
&\leq \sum_{k \in S_{ns}  } \sigma_{1}(  a_{m}|p_{k}|\hat{\bm{z}}_{k}\hat{\bm{z}}_{k}^{T}) = \sum_{k \in S_{ns}  } a_{m}|p_{k}| \\
&\leq a_{m}(d-a)q_{max} \leq a_{m}dq_{max}.
\label{eq:max singular value of insignificant term}
\end{aligned}
\end{equation}
Similarity, for $a=1$, $\bm{M}_{s}= a_{m}|p_{j}|\hat{\bm{z}}_{j}\hat{\bm{z}}_{j}^{T}$ with $j \in S_{s}$. Then, $\sigma_{1}(\bm{M}_{s})=a_{m}|p_{j}| \geq a_{m}\epsilon_{min}$. We complete this proof.
\end{proof}

Combined with Lemma \ref{lemma:singular value of sum of two matrices} and Lemma \ref{lemma:singular value and M}, we can derive the following theorem.
\begin{theorem}
\label{theorem:successfuly determine a}
If $ \epsilon_{min} > 2dq_{max}$ and all downsampled frequencies satisfy $a \leq 1$, then sFFT-DT correctly decides the number of collisions.
\end{theorem}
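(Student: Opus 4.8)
The plan is to exploit the two bounds furnished by Lemma~\ref{lemma:singular value and M} to create a clean spectral gap that separates the downsampled frequencies carrying a significant term from those that do not, and then to argue that the top-$K$ voting scheme of Step~1 of the pruning strategy selects exactly the former. Because the hypothesis forces every downsampled frequency to satisfy $a\leq 1$, and because there are $K$ significant frequencies with no two landing in the same bin (two would give $a=2$), exactly $K$ of the $\frac{N}{d}$ bins have $a=1$ while the remaining bins have $a=0$. Correctly deciding the number of collisions therefore reduces to correctly flagging these $K$ bins.

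First I would dispose of the $a=0$ bins. Here $\bm{M}_{s}=\bm{0}$, so $\bm{M}=\bm{M}_{ns}$ and Lemma~\ref{lemma:singular value and M} gives $\sigma_{1}(\bm{M})=\sigma_{1}(\bm{M}_{ns})\leq a_{m}dq_{max}$; consequently \emph{all} singular values of such a bin lie at or below $a_{m}dq_{max}$. Next I would lower-bound the leading singular value of an $a=1$ bin. Writing $\bm{M}=\bm{M}_{s}+\bm{M}_{ns}$ and invoking Lemma~\ref{lemma:singular value of sum of two matrices} with $\bm{A}=\bm{M}_{s}$, $\bm{B}=\bm{M}_{ns}$, and $j=1$ yields $\sigma_{1}(\bm{M})\geq\sigma_{1}(\bm{M}_{s})-\sigma_{1}(\bm{M}_{ns})$. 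Inserting $\sigma_{1}(\bm{M}_{s})\geq a_{m}\epsilon_{min}$ (valid precisely because $a=1$) and $\sigma_{1}(\bm{M}_{ns})\leq a_{m}dq_{max}$ gives $\sigma_{1}(\bm{M})\geq a_{m}(\epsilon_{min}-dq_{max})$, and the hypothesis $\epsilon_{min}>2dq_{max}$ then forces $\sigma_{1}(\bm{M})>a_{m}dq_{max}$. Thus each $a=1$ bin has a leading singular value strictly above the noise floor established for the $a=0$ bins.

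The main obstacle, and the step I would treat most carefully, is ruling out that a single significant bin contributes more than one value to the top-$K$ list and so corrupts the voting. For an $a=1$ bin, $\bm{M}_{s}$ is rank one, hence $\sigma_{2}(\bm{M}_{s})=0$, and Lemma~\ref{lemma:singular value of sum of two matrices} with $j=2$ gives $\sigma_{2}(\bm{M})\leq\sigma_{1}(\bm{M}_{ns})\leq a_{m}dq_{max}$. Therefore every collected singular value \emph{except} the single $\sigma_{1}$ of each significant bin lies at or below $a_{m}dq_{max}$. Among all $\frac{N}{d}a_{m}$ singular values, exactly $K$ exceed $a_{m}dq_{max}$, one per significant bin, and these are precisely the $K$ largest; the top-$K$ votes therefore land on $K$ \emph{distinct} bins, namely those with $a=1$, while every other bin is declared $a=0$.

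Putting these pieces together completes the argument: sFFT-DT assigns $a=1$ to exactly the $K$ bins that contain a significant term and $a=0$ to the rest, which is correct. The only point requiring a final check is that the strict inequality in the hypothesis matches the strict separation $\sigma_{1}(\bm{M})>a_{m}dq_{max}$ so that no tie can occur at the threshold $a_{m}dq_{max}$, guaranteeing the ordering of singular values is unambiguous.
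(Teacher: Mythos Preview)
Your argument is correct and follows essentially the same route as the paper: use Lemma~\ref{lemma:singular value of sum of two matrices} and Lemma~\ref{lemma:singular value and M} to show that $\sigma_{1}(\bm{M})\geq a_{m}(\epsilon_{min}-dq_{max})>a_{m}dq_{max}$ for $a=1$ bins while $\sigma_{1}(\bm{M})\leq a_{m}dq_{max}$ for $a=0$ bins, and then invoke $\epsilon_{min}>2dq_{max}$ for the strict separation. You actually go one step further than the paper by bounding $\sigma_{2}(\bm{M})\leq a_{m}dq_{max}$ for the $a=1$ bins via the rank-one structure of $\bm{M}_{s}$, which rigorously rules out double-voting in the top-$K$ selection; the paper's own proof omits this point.
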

\begin{proof}
For all downsampled frequencies satisfying $a = 1$, Lemma \ref{lemma:singular value of sum of two matrices} and Lemma \ref{lemma:singular value and M} induce the fact:
$$a_{m}\epsilon_{min}- a_{m}dq_{max} \leq \sigma_{1}(\bm{M}_{s})-\sigma_{1}(\bm{M}_{ns}) \leq \sigma_{1}(\bm{M}).$$
In addition, for all downsampled frequencies with $a=0$,
$$ \sigma_{1}(\bm{M})= \sigma_{1}(\bm{M}_{ns}) \leq a_{m}dq_{max}.$$
As a result, if $a_{m}dq_{max} < a_{m}\epsilon_{min}- a_{m}dq_{max}$, it implies that sFFT-DT can correctly determine the number of collisions, ({\em i.e.}, distinguish the downsampled frequencies with $a=0$ and those with $a=1$, by finding the first $K$ largest singular values).
\end{proof}

Remark: It should be noted that Theorem \ref{theorem:successfuly determine a} only holds for all downsampled frequencies with $a \leq 1$. The probability that there is no downsampled frequency with $a > 1$ shown in Lemma \ref{lemma:probability of bin} is at most $Pr(d,1)$. By setting $d$ larger, it means that Theorem \ref{theorem:successfuly determine a} holds with higher probability as $1-Pr(d,1)$.

In fact, Theorem \ref{theorem:successfuly determine a} also reveals that fact that it is more difficult to satisfy $\epsilon_{min} > 2dq_{max} $ when $d$ becomes large enough.
In other words, one needs to force $K=\Theta(N)$ such that $d=O(\frac{N}{K})=O(1)$ in order to correctly determine $a$.

\subsubsection{Step 2 of Pruning}
After determining $a$'s for all downsampled frequencies, Step 2 of the pruning strategy runs the following procedure to know which locations should be pruned:
\begin{itemize}
  \item[(a).] Solve $ \tilde{c} = (\bm{M}_{s}+\bm{M}_{ns})^{-1}(\bm{m}_{s}+\bm{m}_{s})$.
  \item[(b).] Let $\tilde{P}(z)=z^{a}+\tilde{c}[a-1]\tilde{z}^{a-1}+...+\tilde{c}[1]z+\tilde{c}[0].$
  \item[(c).] $\tilde{Z}$ is the set of collecting all $z \in U_{k}$ with the first $O(a)$ smallest $|\tilde{P}(z)|$.
\end{itemize}
This procedure is similar to syndrome decoding except that Step (ii) in Sec. \ref{ssec:MPT} is changed.
As mentioned above, due to the ill-conditioned problem such as Wilkinson's polynomial, the problem of approximating the roots, given the coefficients with noisy perturbation, is ill-conditioned.
Instead of finding roots by solving the polynomial, since the set including all candidate roots, $U_{k}$, is finite, we substitute all candidate roots in $U_{k}$ into the polynomial and store the roots with the first $O(a)$ smallest errors in the set $\tilde{Z}$, as also adopted in \cite{Takos2008}.

\subsubsection{Step 3 of Pruning}
By feeding $\tilde{Z}$ into Step 3 of the pruning strategy, we can decide which columns of $\bm{\Phi}$ should be pruned according to the following criterion.
If the root belongs to $\tilde{Z}$, its corresponding column is preserved; otherwise, it is pruned.
Finally, let $\Phi_{p}$ be the outcome after pruning and let it be used to replace $\Phi$ in Eq. (\ref{eq:CS Formulation}).

\subsection{Non-iterative sFFT-DT: Algorithm for Generally K-Sparse Signals}\label{Sec: algorithm for general K sparse}

For generally $K$-sparse signals, sFFT-DT solves the aliasing problem once, as shown in Algorithm 2, which integrates the pruning strategy and CS-based approach.
The function \textbf{main} contains four parts.
For clarity, Fig. \ref{fig:flowchart} illustrates the flowchart of sFFT-DT for generally $K$-sparse signals and we describe each part as follows.
Part 1: In Lines 2-9, several downsampled signals are generated for performing FFTs with different shift factors.
Specifically, the downsampled signals in Lines 2-5 are prepared for the pruning strategy and those in Lines 6-9 are used for CS recovery problem.
To distinguish between these two, the signals for Lines 2-5 are represented by $x_{d}$ and those for Lines 6-9 are represented by $x_{s}$.
Part 2: Lines 11-21 run the Step 1 of the pruning strategy and decide the number of significant terms in the downsampled frequencies. $V_{sin}$ is a set used to save all singular values of $\bm{M}$'s (defined in Eq. (\ref{eq:formulation of M for general })) corresponding to frequencies.
Part 3: Lines 23-25 run the Step 2 and Step 3 of the pruning strategy, where $\tilde{Z}$ collects the roots corresponding to insignificant terms.
According to $\tilde{Z}$, we can prune $\Phi$ and output $\Phi_{p}$.
Part 4: Given $\Phi_{p}$, Lines 26-28 solve the CS recovery problem by Subspace Pursuit, as mentioned in Sec. \ref{Sec: Refinement}.

\begin{figure}[t]
\begin{minipage}[b]{.98\linewidth}
\centering{\epsfig{figure=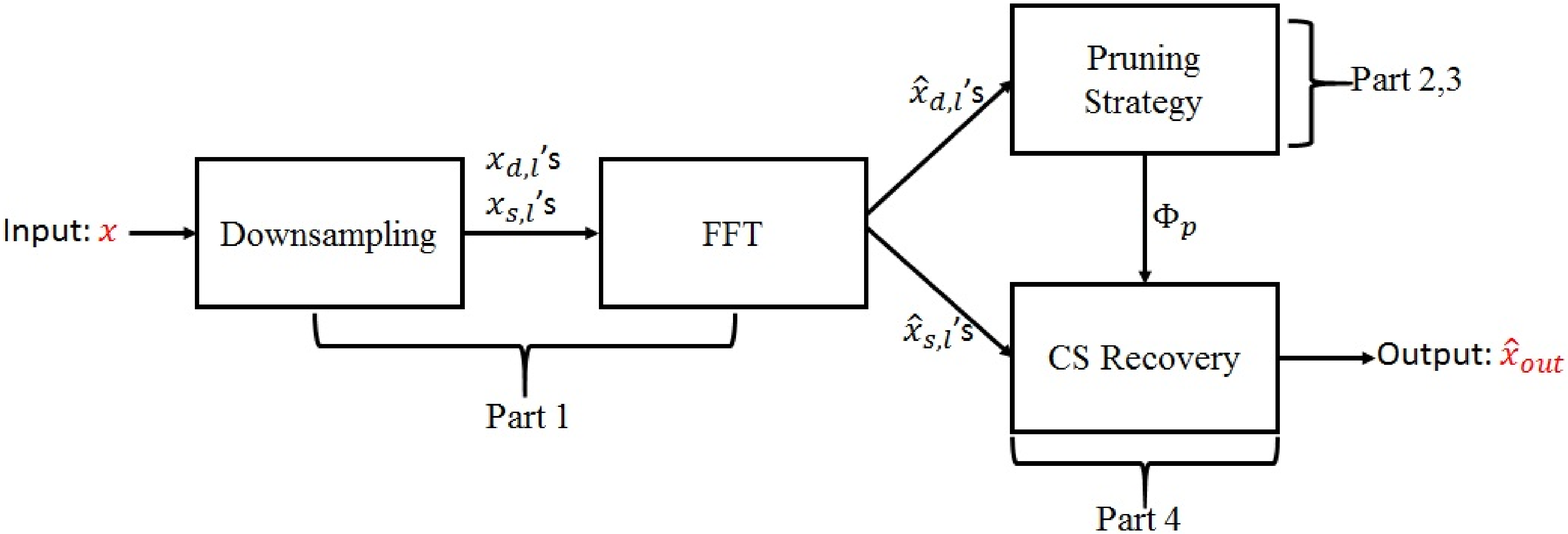,width=5.0 in}}
\end{minipage}
\hfill
\caption{Flowchart of sFFT-DT for generally $K$-sparse signals.}
\label{fig:flowchart}
\end{figure}

\begin{algorithm}[h]
\fontsize{11pt}{0.9em}\selectfont
\setlength{\abovecaptionskip}{0pt}
\setlength{\belowcaptionskip}{0pt}
\caption{sFFT-DT for generally $K$-sparse signals.}
\label{Table:our algorithm for general K-sparse signal}
\begin{tabular}[t]{p{17.7cm}l}
\textbf{Input:} $\bm{x}$, $K$;\quad \textbf{Output:} $\bm{\hat{x}}_{out}$; \\
\textbf{Initialization:} $\bm{\hat{x}}_{out}=\mathbf{0}$, $  d=O(\frac{N}{K})$, $R=\{ \}$, $V_{sin}=\{ \}$, $a_{m}$; \\
\hline\hline
01. \textbf{function} \textbf{main}()\\
02. \quad\textbf{for} $l = 0$ to $a_{m} -1$ \\
03. \quad\quad $x_{d,2l}[k]=x[dk+2l]$ for $  k\in [0,\frac{N}{d}-1]$;\\
04. \quad\quad $x_{d,2l+1}[k]=x[dk+2l+1]$ for $  k\in [0,\frac{N}{d}-1]$;\\
05. \quad\textbf{end for}\\
06. \quad Generate $\{n_{0},n_{1},...n_{3a_{m}-1} \}$ in Sec. \ref{Sec: Refinement};\\
07. \quad\textbf{for} $l = 0$ to $3a_{m} -1$ \\
08. \quad\quad $x_{s,l}[k]=x[dk+n_{l}]$ for $  k\in [0,\frac{N}{d}-1]$;\\
09. \quad\textbf{end for}\\
10. \quad Do FFT of all $\bm{x}_{d}$'s, $\bm{x}_{s}$'s to obtain $\bm{\hat{x}}_{d}$'s and $\bm{\hat{x}}_{s}$'s.\\
11. \quad \textbf{for} $k = 0$ to $  \frac{N}{d}-1$\\
12. \quad\quad \quad $m_{j}= \hat{x}_{d,j}[k]$ for $j \in [0,2a_{m}-1]$; \\
13. \quad\quad \quad Use $m_{j}$'s to form $\bm{M}$ defined in Sec. \ref{Sec: decide a}; \\
14. \quad\quad \quad Do SVD of $\bm{M}$ and put singular values into \\
\quad\quad\quad\quad \quad  the set $V_{sin}$;    \\
15. \quad \textbf{end for} \\
16. \quad Find the first $K$ largest singular values from \\
\quad\quad\quad $V_{sin}$ and save them as $\sigma_{1}, \sigma_{2},...,\sigma_{K}$.\\
17. \quad \textbf{for} $l = 1$ to $K$\\
18. \quad\quad \textbf{if} ($\sigma_{l}$ originates from the $k$'th frequency) \\
19. \quad \quad \quad $a$ of the $k$'th frequency increases by 1;\\
20. \quad \quad \textbf{end if}\\
21. \quad \textbf{end for} \\
22. \quad \textbf{for} $k = 0$ to $  \frac{N}{d}-1$\\
23. \quad \quad $m_{j}= \hat{x}_{d,j}[k]$ for $j \in [0,2a_{m}-1]$; \\
24. \quad \quad Run Step 2 and 3 of pruning strategy in Sec. \ref{Sec: decide a};\\
25. \quad \quad and output $\Phi_{p}$.\\
26. \quad \quad $m_{j}= \hat{x}_{s,j}[k]$ for $j \in [0,3a_{m}-1]$; \\
27. \quad \quad Solve Eq. (\ref{eq:CS Formulation}) given $\Phi_{p}$ by SP and assign  \\
28. \quad \quad $\bm{\hat{x}}_{out}[k+j\frac{N}{d}]=$$\bm{s}_{out}[j]$ for $j \in [0,d-1]$. \\
29. \quad \textbf{end for} \\
30. \textbf{end} \textbf{function}\\
\hline
\end{tabular}
\end{algorithm}
\renewcommand\arraystretch{1}

\begin{table}[t]
\fontsize{7.5pt}{1em}\selectfont
\centering
\setlength{\abovecaptionskip}{0pt}
\setlength{\belowcaptionskip}{4pt}
\caption{The effect of pruning in terms of computational cost and recovery performance under $N=2^{24}$ and $SNR(\bm{\hat{x}}_{s})=30$ $dB$.}
\label{Table: pruning comparision}
\doublerulesep=2pt
\begin{tabular}[tc]{|c|c|c|c|c|c|c|c|c|c|c|c|c|c|}
\hline
{\LARGE \textcolor{white}{o}}K& $2^{8}$& $2^{9}$& $2^{10}$ & $2^{11}$& $2^{12}$ & $2^{13}$ & $2^{14}$ & $2^{15}$ & $2^{16}$ & $2^{17}$ & $2^{18}$ & $2^{19}$ & $2^{20}$ \\ \hline
Time Cost without Pruning (Sec)& 5.731& 4.287& 3.315 & 3.813 & 4.459 & 8.681 & 15.10 & 23.61 & 50.27 & 101.22 & 217.28 & 463.21 & 989.41  \\ \hline
Time Cost with Pruning (Sec)& 0.021&  0.022 & 0.033 & 0.053 & 0.101 & 0.211 & 0.321 & 0.674 & 1.237 & 2.524 & 5.138 & 9.918 &19.539 \\ \hline
$SNR(\bm{\hat{x}}_{out})$ without Pruning (dB)& -66.1& -51.9& -36.4 & -24.6 & -13.9 & -2.37 & 11.34 & 21.6 & 28.7 & 29.3 & 29.7 & 29.9 & 29.9 \\ \hline
$SNR(\bm{\hat{x}}_{out})$  with Pruning (dB)& 4.67& 10.1& 14.8 & 20.1 & 23.1 & 24.9 & 27.7 & 29.7 & 29.9 & 29.9 & 29.9 & 29.9 & 29.9 \\ \hline
\end{tabular}
\end{table}

\begin{table}[t]
\fontsize{7.5pt}{1em}\selectfont
\centering
\setlength{\abovecaptionskip}{0pt}
\setlength{\belowcaptionskip}{4pt}
\caption{The effect of pruning in terms of computational cost and recovery performance under $N=2^{24}$ and $SNR(\bm{\hat{x}}_{s})=20$ $dB$.}
\label{Table: pruning comparision 2}
\doublerulesep=2pt
\begin{tabular}[tc]{|c|c|c|c|c|c|c|c|c|c|c|c|c|c|}
\hline
{\LARGE \textcolor{white}{o}}K& $2^{8}$& $2^{9}$& $2^{10}$ & $2^{11}$& $2^{12}$ & $2^{13}$ & $2^{14}$ & $2^{15}$ & $2^{16}$ & $2^{17}$ & $2^{18}$ & $2^{19}$ & $2^{20}$\\ \hline
Time Cost without Pruning (Sec)& 5.693& 4.436& 3.761 & 3.903 & 4.634 & 8.511 & 16.20 & 31.61 & 51.92 & 108.49 & 229.31 & 492.01 &1032.94 \\ \hline
Time Cost with Pruning (Sec)& 0.021&  0.023 & 0.031 & 0.056 & 0.097 & 0.187 & 0.335 & 0.622 & 1.343 & 2.724 & 5.605 & 10.492 &20.034 \\ \hline
$SNR(\bm{\hat{x}}_{out})$ without Pruning (dB)& -66.4& -53.3& -40.4 & -28.2 & -15.9 & -4.97 & 9.19 & 19.3 & 19.9& 19.9& 19.9& 19.9& 19.9 \\ \hline
$SNR(\bm{\hat{x}}_{out})$  with Pruning (dB)& 0.04& 1.56 & 6.78 & 12.1 & 16.4 & 18.1 & 19.3 & 19.7 & 19.9& 19.9& 19.9& 19.9& 19.9 \\ \hline
\end{tabular}
\end{table}

\begin{table}[t]
\fontsize{7.5pt}{1em}\selectfont
\centering
\setlength{\abovecaptionskip}{0pt}
\setlength{\belowcaptionskip}{4pt}
\caption{The effect of pruning in terms of computational cost and recovery performance under $N=2^{24}$ and $SNR(\bm{\hat{x}}_{s})=10$ $dB$.}
\label{Table: pruning comparision 3}
\doublerulesep=2pt
\begin{tabular}[tc]{|c|c|c|c|c|c|c|c|c|c|c|c|c|c|}
\hline
{\LARGE \textcolor{white}{o}}K& $2^{8}$& $2^{9}$& $2^{10}$ & $2^{11}$& $2^{12}$ & $2^{13}$ & $2^{14}$ & $2^{15}$ & $2^{16}$ & $2^{17}$ & $2^{18}$ & $2^{19}$ & $2^{20}$ \\ \hline
Time Cost without Pruning (Sec)& 5.611& 4.627& 3.802 & 3.892 & 4.561 & 8.639 & 16.39 & 30.32 & 59.14& 124.12 & 273.21 & 522.52& 1095.42 \\ \hline
Time Cost with Pruning (Sec)& 0.023&  0.029 & 0.038 & 0.052 & 0.125 & 0.212 & 0.326 & 0.644 & 1.227 & 2.321 & 4.732 & 9.327 & 19.394\\ \hline
$SNR(\bm{\hat{x}}_{out})$ without Pruning (dB)& -73.1& -60.6& -48.5 & -36.3 & -24.3 & -11.6 & 2.27 & 9.53 & 9.97 & 9.98 & 9.99 & 9.99 & 9.99\\ \hline
$SNR(\bm{\hat{x}}_{out})$  with Pruning (dB)& -1.19& -0.41& 0.85 & 2.59 & 6.03 & 8.83 & 9.69 & 9.94 & 9.98 & 9.99 & 9.99 & 9.99 & 9.99  \\ \hline
\end{tabular}
\end{table}

\begin{figure*}[!t]
\begin{minipage}[b]{.48\linewidth}
  \centering{\epsfig{figure=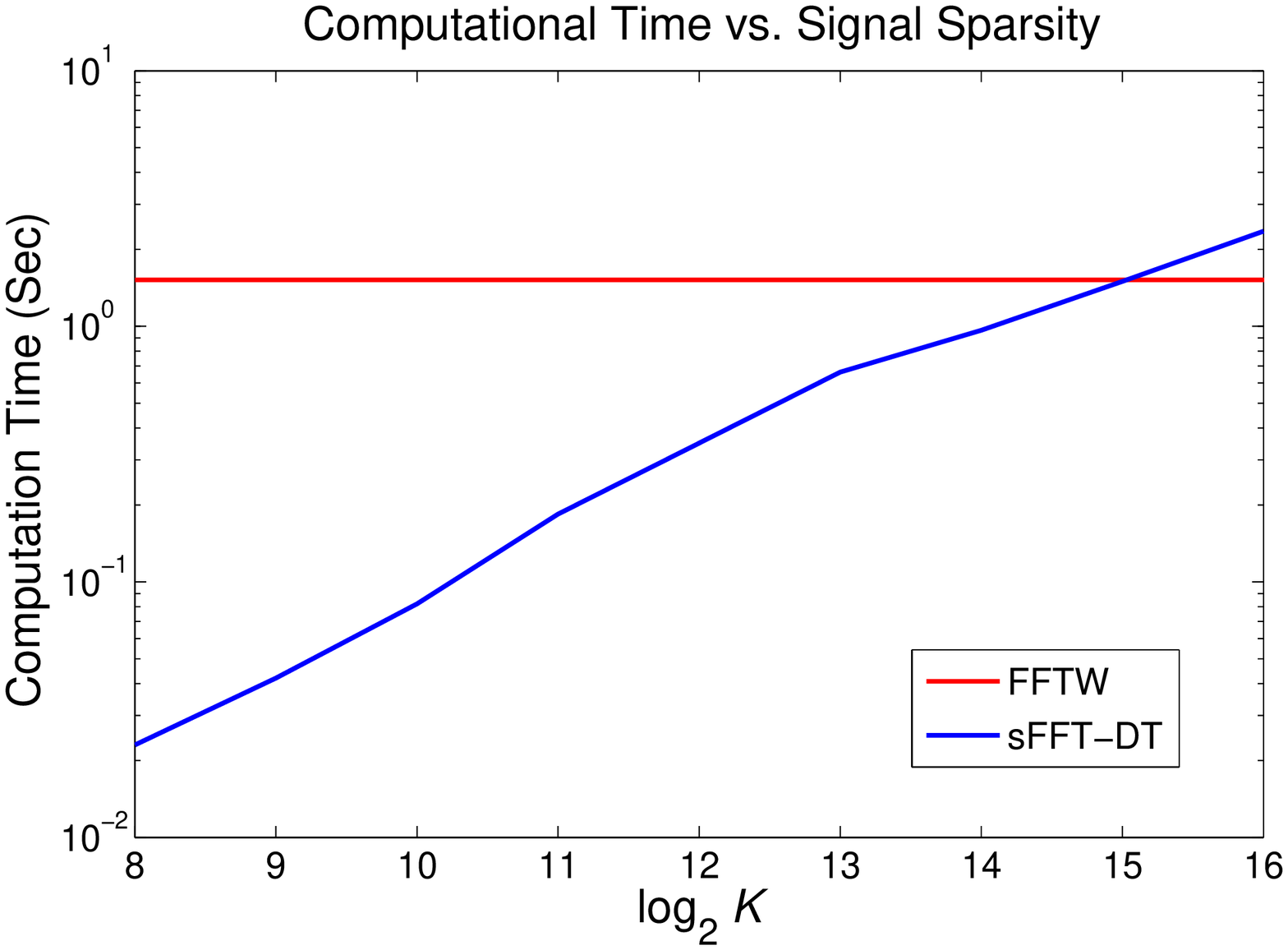,width=3.45in}}
  \centerline{(a)}
\end{minipage}
\begin{minipage}[b]{.48\linewidth}
  \centering{\epsfig{figure=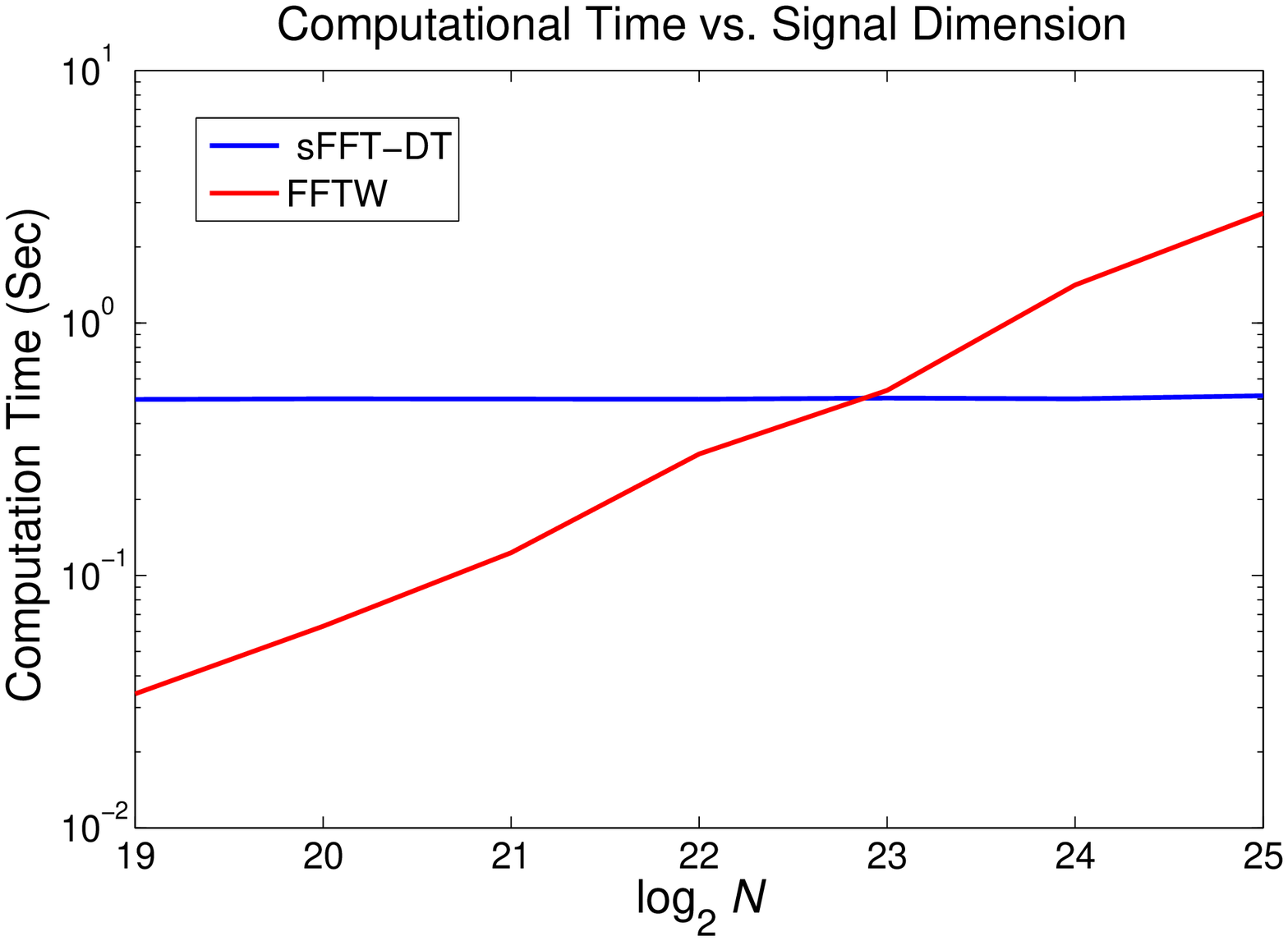,width=3.45in}}
  \centerline{(b)}
\end{minipage}
\hfill
\caption{Comparison between non-iterative sFFT-DT and FFTW for generally $K$-sparse signals. (a) Computational time vs. sparsity under $N=2^{24}$ and $a_{m}=3$. (b) Computational time vs. signal dimension under $K=2^{12}$ and $a_{m}=3$. }
\label{fig:general computational time}
\end{figure*}

\subsection{Computational Complexity of sFFT-DT for Generally $K$-Sparse Signals}\label{Sec: Complexity  for General Ksparse Signal}
In this section, we analyze the computational cost of sFFT-DT for generally K-sparse signals based on Theorem \ref{theorem:probability of sfftdt v2} for the four parts of the Main function. 

Part 1 is to do FFT for downsampled signals, and it costs $  O( a_{m} \frac{N}{d}\log \frac{N}{d})$.
Part 2 solves SVD of $\bm{M} \in \mathbb{C}^{a_{m}\times a_{m}}$ for each downsampled frequency.
Since SVD will totally run $  O(\frac{N}{d})$ times, Part 2 will cost $ O(  \frac{N}{d}  a_{m}^{3})$, according to \cite{Angelika1988}.
Part 3 costs $O(\frac{N}{d} a_{m}^2)$ for computing coefficients of polynomial and $O(N)$ for estimating $|\tilde{P}(z)|$ for all $z \in U_{k}$ in Sec. \ref{Sec: decide a}.
Finally, CS recovery problem in Part 4 depends on the cost of SP.
With the pruning strategy, SP costs $O(a_{m}rd)=O(a_{m}^{3})$.
Thus, the total cost in Part 4 is $O(\frac{N}{d}a_{m}^{3})$ since SP runs $O(\frac{N}{d})$ times, as described in Sec. \ref{Sec: Refinement}.
Thus, the total computational cost of sFFT-DT is bounded by $\max(O( a_{m} \frac{N}{d}\log \frac{N}{d}),O(N))$.

Consequently, the computational cost of sFFT-DT for generally $K$-sparse signals still is impacted by $a_{m}$ and $d$ as in the exactly-$K$ sparse case.
If significant frequencies distribute uniformly, both $a_{m}$ and $d$ can be set based on Theorem \ref{theorem:probability of sfftdt v2}.
In this case, since $a_{m}$ is a constant, the computational cost is bounded by Part 1 and Part 3, which is $\max(O(K \log K),O(N))$.
It should be noted that the Big-O constant of $O(N)$ is very small because only Step 2 of pruning in Line 24 involves $O(N)$ and the operation of estimating $|\tilde{P}(z)|$ for all $z \in U_{k}$ is simple.
Thus, as shown in our experimental results, $O(N)$ does not dominate the computational cost of sFFT-DT.
But the Big-O constants of the generally $K$-sparse case are still larger than those of the exactly $K$-sparse case because the former needs more syndromes.
%

\subsection{Simulation Results for Generally $K$-Sparse Signals}\label{Sec: experimental result of General}
The simulation environment is similar to the one described in Sec. \ref{Sec: Experimental Results}.
We only compare sFFT-DT with FFTW because sFFT \cite{Haitham2012}\cite{Haitham2012_1} does not release the code and the code of sFFT for the generally $K$-sparse case is difficult to implement (as mentioned in the footnote on Page 3).
Therefore, no experimental results for generally $K$-sparse signals were shown in their papers or websites.

Here, the test signals were generated from the mixture Gaussian model as:
\small
\begin{equation}
   \hat{x} \sim p\mathcal{N}\left( 0,\sigma_{on}^{2} \right)+\left( 1-p \right) \mathcal{N} \left( 0,\sigma_{off}^{2} \right),
  \label{eq:signalproduce}
\end{equation}
\normalsize
where $  p=\frac{K}{N}$ is the active probability that decides which Gaussian model is used and $\sigma_{on}>\sigma_{off}$. For each test signal, its significant terms is defined as $\bm{\hat{x}}_{s}$, as described in Sec. \ref{Sec: sFFT-DT: General K-Sparse}, and $\bm{\hat{x}}_{out}$ is the output signal obtained from sFFT-DT.
We also define
$SNR(\cdot)$ as:
\small
\begin{equation}
SNR(\bm{\bar{x}})=10 \log_{10}   \frac{MSE(\bm{\bar{x}})}{MSE(\bm{\hat{x}}-\bm{\bar{x}})},
\label{Eq: SNR_ori}
\end{equation}
\normalsize
where $MSE( \cdot )$ is the function of calculating the mean squared error.
If $\bm{\bar{x}}=\bm{\hat{x}}_{s}$, then $SNR(\bm{\hat{x}}_{s})$ means the signal-to-noise ratio between significant terms and insignificant terms.
In our simulations, the parameter setting was $d=\frac{N}{32K}$, $a_{m}=3$, and $SNR(\bm{\hat{x}}_{s})$ ranges from $10$ to $30$ dB.

Tables \ref{Table: pruning comparision}, \ref{Table: pruning comparision 2}, and \ref{Table: pruning comparision 3}  show the efficiency of pruning.
We can see that sFFT-DT with pruning outperforms its counterpart without pruning in terms of computational cost and recovery performance.
The performance degrades when $\frac{N}{K}$ becomes larger as predicted in Theorem \ref{theorem:sFFT performance without no pruning}.
Moreover, we can observe from Table \ref{Table: pruning comparision} $\sim$ Table \ref{Table: pruning comparision 3} that  no matter $SNR(\bm{\hat{x}}_{out})$ is, the condition for achieving perfect approximation in sFFT-DT, {\em i.e.}, $SNR(\bm{\hat{x}}_{out})\approx SNR(\bm{\hat{x}}_s)$, is always $\frac{N}{K} \leq 2^9$.
The phenomenon is consistent with the reconstruction error bound in Theorem \ref{theorem:sFFT performance without no pruning}. Specifically, the reconstruction error bound, $ C_{\delta_{6a_{m}}}\sqrt{ \| \bm{\hat{x}}_{ns}\|_2^2 + \tau(   \frac{N}{\mu} -Ka_{m})\|  \bm{\hat{x}}  \|_\infty^{2} }$, is affected by $ \|\bm{\hat{x}}_{ns}\|_2^2 $ and  $\tau(   \frac{N}{\mu} -Ka_{m})\|  \bm{\hat{x}}  \|_\infty^{2}$.
However, when $ \frac{N}{K}$ is small, $\tau(   \frac{N}{\mu} -Ka_{m})\|  \bm{\hat{x}}  \|_\infty^{2} \rightarrow 0$ and thus $\| \bm{\hat{x}}-\bm{\hat{x}}_{out} \|_2$ is equal to $C_{\delta_{6a_{m}}} \| \bm{\hat{x}}_{ns}\|_2  $.
In other words, the reconstruction error bound is linear to $\| \bm{\hat{x}}_{ns}\|_2 $.
This is a good property as the reconstruction quality of sFFT-DT is inversely proportional to the energy of insignificant terms, $ \| \bm{\hat{x}}_{ns}\|_2 $.

The comparison of computational time between sFFT-DT and FFTW is depicted in Fig. \ref{fig:general computational time}.
Fig. \ref{fig:general computational time}(a) shows the results of computational time versus signal sparsity under fixed $N$.
It is observed that sFFT-DT is remarkably faster than FFTW, except for the cases with $K \geq 2^{15}$.
Fig. \ref{fig:general computational time}(b) shows the results of computational time versus signal dimension under fixed $K$.
It is apparent that the computational time of sFFT-DT is not related to $N$.

In sum, compared with \cite{Haitham2012}\cite{Haitham2012_1}, the proposed sFFT-DT for generally $K$-sparse signals is the first algorithm with the reasonable Big-O constants and is verified to be faster than FFTW.

\section{Conclusions}\label{Sec: Conclusions}
We have presented new sparse Fast Fourier Transform methods based on downsampling in the time domain (sFFT-DT) for both exactly $K$-sparse and generally $K$-sparse signals in this paper.
The accurate computational cost and theoretical performance lower bound of sFFT-DT are proven for exactly $K$-sparse signals.
We also derive the Big-O constants of computational complexity of sFFT-DT and show that they are smaller than those of MIT's methods \cite{Haitham2012}\cite{Haitham2012_1}\cite{Ghazi2013}.
In addition, sFFT-DT is more hardware-friendly, compared with other algorithms, since all operations of sFFT-DT are linear and involved in an analytical solution.
On the other hand, previous works, such as \cite{Haitham2012}\cite{Haitham2012_1}\cite{Ghazi2013}, are based on the assumption that sparsity $K$ is known in advance.
To address this issue, we proposed a simple solution to estimate $K$ and relax this impractical assumption.
We show that the extra cost for deciding $K$ is the same as that required for sFFT-DT with known $K$.
Moreover, we extend sFFT-DT to generally $K$-sparse signals in this paper.
To solve the interference from insignificant frequencies in aliasing, we first reformulate the aliasing problem as CS-based model solved by subspace pursuit and present a pruning strategy to further improve the recovery performance and computational cost.

Overall, theoretical complexity analyses and simulation results demonstrate that our sFFT-DT outperforms the state-of-the-art.

%

\section{Acknowledgment}
This work was supported by National Science Council under grants NSC 100-2628-E-001-005-MY2 and NSC 102-2221-E-001-022-MY2.

\section{Appendix}\label{Sec: Appendix}
\footnotesize
The analytical solution of solving Step (ii) in syndrome decoding with $a=2$ is
\begin{equation}
\begin{aligned}
    &c_{d}=
    \left|
          \begin{array}{cc}
            m_{0} & m_{1} \\
            m_{1} & m_{2} \\
          \end{array}
        \right|,\\
    \displaystyle &c_{0}= (\frac{1}{c_{d}})\left|
          \begin{array}{cc}
            -m_{2} & m_{1} \\
            -m_{3} & m_{2} \\
          \end{array}
        \right|,\
    \displaystyle c_{1}= (\frac{1}{c_{d}})\left|
          \begin{array}{cc}
            m_{0} & -m_{2} \\
            m_{1} & m_{3} \\
          \end{array}
    \right|,\\
    \displaystyle &z_{0}=\frac{1}{2}[-c_{1}-(c_{1}^{2}-4c_{0})^{\frac{1}{2}}], \
    \displaystyle z_{1}=\frac{1}{2}[-c_{1}+(c_{1}^{2}-4c_{0})^{\frac{1}{2}}],\\
    &p_{d}=z_{1}-z_{0},\\
    \displaystyle &p_{0}= (\frac{1}{p_{d}})\left|
          \begin{array}{cc}
            m_{0} & 1 \\
            m_{1} & z_{1} \\
          \end{array}
    \right|,\
    p_{1}= m_{0}-p_{0}.
\end{aligned}
\label{eq:MPT analytic solutions}
\end{equation}
Similarly, the solution with $a=3$ is
\begin{equation}
\begin{aligned}
    \displaystyle&c_{d}=
    \left|
          \begin{array}{ccc}
            m_{0} & m_{1} & m_{2} \\
            m_{1} & m_{2} & m_{3}\\
            m_{2} & m_{3} & m_{4}\\
          \end{array}
        \right|,\\
    \displaystyle&c_{0}= (\frac{1}{c_{d}})\left|
          \begin{array}{ccc}
            -m_{3} & m_{1} & m_{2} \\
            -m_{4} & m_{2} & m_{3} \\
            -m_{5} & m_{3} & m_{4} \\
          \end{array}
        \right|,\
    c_{1}= (\frac{1}{c_{d}})\left|
          \begin{array}{ccc}
            m_{0} & -m_{3} & m_{1} \\
            m_{1} & -m_{4} & m_{2} \\
            m_{2} & -m_{5} & m_{3} \\
          \end{array}
    \right|,
    c_{2}= (\frac{1}{c_{d}})\left|
          \begin{array}{ccc}
            m_{0} & m_{1} & -m_{3} \\
            m_{1} & m_{2} & -m_{4} \\
            m_{2} & m_{3} & -m_{5} \\
          \end{array}
    \right|,\\
    \displaystyle&z_{0}=-\frac{c_{2}}{3}-A-B,
    z_{1}=-\frac{c_{2}}{3}-W_{1}A-W_{2}B,
    z_{2}=-\frac{c_{2}}{3}-W_{2}A-W_{1}B,\\
    \displaystyle&A=\{ (\frac{c_{0}}{2} - \frac{c_{1}c_{2}}{6}+\frac{c^{3}_{2}}{27})-[ (\frac{c_{0}}{2}-\frac{c_{1}c_{2}}{6}+\frac{c_{2}^{3}}{27}  )^{2} + (\frac{c{1}}{3} - \frac{c_{2}^{2}}{9})^{3} ]^{\frac{1}{2}}  \}^{\frac{1}{3}},\\
    \displaystyle&B=\frac{-(\frac{c_{1}}{3}-\frac{c_{2}^{2}}{9})}{A},\\
    \displaystyle&W_{1}=\frac{-1}{2}+\frac{\sqrt{-3}}{2}, W_{2}=\frac{-1}{2}-\frac{\sqrt{-3}}{2},\\
    \displaystyle&p_{d}=\left|
          \begin{array}{ccc}
            1  & 1  & 1 \\
            z_{0} & z_{1} & z_{2}\\
            z_{0}^{2} & z_{1}^{2} & z_{2}^{2}\\
          \end{array}
    \right|,
    p_{0}= (\frac{1}{p_{d}})\left|
          \begin{array}{ccc}
            m_{0} & 1  & 1 \\
            m_{1} & z_{1} & z_{2}\\
            m_{2} & z_{1}^{2} & z_{2}^{2}\\
          \end{array}
    \right|,
       p_{1}= (\frac{1}{p_{d}})\left|
          \begin{array}{ccc}
            1 & m_{0}  & 1 \\
            z_{0} & m_{1} & z_{2}\\
            z_{0}^{2} & m_{2} & z_{2}^{2}\\
          \end{array}
    \right|,
    p_{2}= 1-p_{0}-p_{1}.
\end{aligned}
\label{eq:MPT analytic solutions with a3}
\end{equation}
Then, the solution with $a=4$ is
\begin{equation*}
\begin{aligned}
    \displaystyle&c_{d}=
    \left|
          \begin{array}{cccc}
            m_{0} & m_{1} & m_{2} & m_{3} \\
            m_{1} & m_{2} & m_{3} & m_{4}\\
            m_{2} & m_{3} & m_{4} & m_{5}\\
            m_{3} & m_{4} & m_{5} & m_{6}\\
          \end{array}
        \right|,\\
    \displaystyle&c_{0}= (\frac{1}{c_{d}})\left|
          \begin{array}{cccc}
            -m_{4} & m_{1} & m_{2} & m_{3}\\
            -m_{5} & m_{2} & m_{3} & m_{4}\\
            -m_{6} & m_{3} & m_{4} & m_{5}\\
            -m_{7} & m_{3} & m_{5} & m_{6}\\
          \end{array}
        \right|,\
    c_{1}= (\frac{1}{c_{d}})\left|
          \begin{array}{cccc}
            m_{0} & -m_{4} & m_{2} & m_{3}\\
            m_{1} & -m_{5} & m_{3} & m_{4}\\
            m_{2} & -m_{6} & m_{4} & m_{5}\\
            m_{3} & -m_{7} & m_{5} & m_{6}\\
          \end{array}
    \right|,\\
    \displaystyle&c_{2}= (\frac{1}{c_{d}})\left|
          \begin{array}{cccc}
            m_{0} & m_{1} & -m_{4} & m_{3} \\
            m_{1} & m_{2} & -m_{5} & m_{4}\\
            m_{2} & m_{3} & -m_{6} & m_{5}\\
            m_{3} & m_{4} & -m_{7} & m_{6}\\
          \end{array}
    \right|,\
       c_{3}= (\frac{1}{c_{d}})\left|
          \begin{array}{cccc}
            m_{0} & m_{1} & m_{2} & -m_{4} \\
            m_{1} & m_{2} & m_{3} & -m_{5}\\
            m_{2} & m_{3} & m_{4} & -m_{6}\\
            m_{3} & m_{4} & m_{5} & -m_{7}\\
          \end{array}
    \right|,
\end{aligned}
\end{equation*}
\begin{equation}
\begin{aligned}
    \displaystyle&z_{0}=\frac{1}{2}\{ -(\frac{c_{3}}{2}+A) - [ (\frac{c_{3}}{2}+A)^{2}-4(Y+B) ]^{\frac{1}{2}}   \},
    z_{1}=\frac{1}{2}\{ -(\frac{c_{3}}{2}+A) + [ (\frac{c_{3}}{2}+A)^{2}-4(Y+B) ]^{\frac{1}{2}}   \},\\
    \displaystyle&z_{2}=\frac{1}{2}\{ -(\frac{c_{3}}{2}-A) - [ (\frac{c_{3}}{2}-A)^{2}-4(Y-B) ]^{\frac{1}{2}}   \},
    z_{3}=\frac{1}{2}\{ -(\frac{c_{3}}{2}-A) + [ (\frac{c_{3}}{2}-A)^{2}-4(Y-B) ]^{\frac{1}{2}}   \},\\
    \displaystyle&A=\frac{1}{2}(c_{3}^2-4c_{2}+8Y)^{\frac{1}{2}},\
    B=\frac{c_{3}Y-c_{1}}{A},\
    Y=\frac{c_{2}}{6}-C-D,\\
    \displaystyle&C=[G+(G^{2}+H^{3})^{\frac{1}{2}}]^{\frac{1}{3}},\
    D=\frac{-H}{C},\
    G=\frac{1}{432}(72c_{0}c_{2}+9c_{1}c_{2}c_{3}-27c_{1}^{2}-27c_{0}c_{3}^2-2c_{2}^{3}),\\
    \displaystyle&H=\frac{1}{36}(3c_{1}c_{3}-12c_{0}-c_{2}^{2}),\\
    \displaystyle&p_{d}=\left|
          \begin{array}{cccc}
            1  & 1  & 1  & 1 \\
            z_{0} & z_{1} & z_{2} & z_{3}\\
            z_{0}^{2} & z_{1}^{2} & z_{2}^{2} & z_{3}^{2}\\
            z_{0}^{3} & z_{1}^{3} & z_{2}^{3} & z_{3}^{3}\\
          \end{array}
    \right|,
    p_{0}= (\frac{1}{p_{d}})\left|
          \begin{array}{cccc}
            1 & 1  & 1  & 1\\
            m_{1} & z_{1} & z_{2} & z_{3}\\
            m_{2} & z_{1}^{2} & z_{2}^{2} & z_{3}^{2}\\
            m_{3} & z_{1}^{3} & z_{2}^{3} & z_{3}^{3}\\
          \end{array}
    \right|,
       p_{1}= (\frac{1}{p_{d}})\left|
          \begin{array}{cccc}
            1 & 1  & 1 & 1 \\
            z_{0} & m_{1} & z_{2} & z_{3}\\
            z_{0}^{2} & m_{2} & z_{2}^{2} & z_{3}^{2}\\
            z_{0}^{3} & m_{3} & z_{2}^{3} & z_{3}^{3}\\
          \end{array}
    \right|,\\
    \displaystyle&p_{2}= (\frac{1}{p_{d}})\left|
          \begin{array}{cccc}
            1 & 1  & 1 & 1 \\
            z_{0} &  z_{1} & m_{1} & z_{3}\\
            z_{0}^{2} &  z_{1}^{2} & m_{2} & z_{3}^{2}\\
            z_{0}^{3} &  z_{1}^{3} & m_{3} & z_{3}^{3}\\
          \end{array}
    \right|,
    p_{3}= 1-p_{0}-p_{1}-p_{2}.
\end{aligned}
\label{eq:MPT analytic solutions with a4}
\end{equation}

\ifCLASSOPTIONcaptionsoff
  \newpage
\fi

\bibliographystyle{IEEEbib}	
\bibliography{refs}		

\end{document}